\renewcommand{\algocf@captiontext}[2]{#1\algocf@typo. \AlCapFnt{}#2}
\def\@algocf@capt@plain{top}
\renewcommand{\algocf@makecaption}[2]{%
  \addtolength{\hsize}{\algomargin}%
  \sbox\@tempboxa{\algocf@captiontext{#1}{#2}}%
  \ifdim\wd\@tempboxa >\hsize%
    \hskip .5\algomargin%
    \parbox[t]{\hsize}{\algocf@captiontext{#1}{#2}}%
  \else%
    \global\@minipagefalse%
    \hbox to\hsize{\box\@tempboxa}%
  \fi%
  \addtolength{\hsize}{-\algomargin}%
}
\def\Clust{K}
\def\dataspace{\mathcal{D}}
\def\dglob{d_{\mathrm{\scriptscriptstyle glob}}}
\def\dloc{d_{\mathrm{\scriptscriptstyle loc}}}
\def\globparam{\beta}
\def\iperm{l}
\def\locparam{\mu}
\def\Npermstrat#1{L_{#1}}
\def\obsdata{\mathbf{y}}
\def\obsdataUn{y}
\def\param{\boldsymbol{\theta}}
\def\parprop{\boldsymbol{\theta}}
\def\parset{\Theta}
\def\proj{\varphi_{\obsdata}}
\def\simdata{\mathbf{z}}
\def\eps{\varepsilon}
\def\projsharp{{\varphi_{\mathbf{y}}}_{\sharp}}
\DeclareMathOperator*{\argmin}{arg\,min}
\theoremstyle{plain}
\newtheorem{theorem}{Theorem}[section]
\newtheorem{lemma}[theorem]{Lemma}
\newtheorem{corollary}[theorem]{Corollary}
\theoremstyle{definition}
\theoremstyle{remark}
\begin{document}

\title{Permutations accelerate Approximate Bayesian Computation}

\author{
Antoine Luciano\\
CEREMADE, Université Paris Dauphine--PSL, Paris, France\\
\texttt{luciano@ceremade.dauphine.fr}
\and
Charly Andral\\
CEREMADE, Université Paris Dauphine--PSL\\
\texttt{andral@ceremade.dauphine.fr}
\and
Christian P. Robert\\
CEREMADE, Université Paris Dauphine--PSL\\
\texttt{xian@ceremade.dauphine.fr}
\and
Robin J. Ryder\\
Department of Mathematics, Imperial College London, U.K.\\
\texttt{r.ryder@imperial.ac.uk}
}


\maketitle

\begin{abstract}
Approximate Bayesian Computation (ABC) methods have become essential tools for performing inference when likelihood functions are intractable or computationally prohibitive. However, their scalability remains a major challenge in hierarchical or high-dimensional models. In this paper, we introduce \emph{permABC}, a new ABC framework designed for settings with both global and local parameters, where observations are grouped into exchangeable compartments.
Building upon the Sequential Monte Carlo ABC (ABC-SMC) framework, permABC exploits the exchangeability of compartments through permutation-based matching, significantly improving computational efficiency. 

We then develop two further, complementary sequential strategies: \emph{Over Sampling}, which  facilitates early-stage acceptance by temporarily increasing the number of simulated compartments, and \emph{Under Matching}, which relaxes the acceptance condition by matching only subsets of the data.

These techniques allow for robust and scalable inference even in high-dimensional regimes. Through synthetic and real-world experiments — including a hierarchical Susceptible-Infectious-Recover model of the early COVID-19 epidemic across 94 French departments — we demonstrate the practical gains in accuracy and efficiency achieved by our approach.
\end{abstract}

\textbf{Keywords:} Approximate Bayesian Computation; Sequential Monte Carlo; Hierarchical models; Permutation-based inference; Likelihood-free methods; Over sampling; Under matching; Exchangeability.

\section{Introduction}

Approximate Bayesian Computation (ABC) \citep{marinApproximateBayesianComputational2012a} has emerged as a pivotal tool for statistical inference in scenarios where the likelihood function is computationally infeasible or explicitly unavailable. Initially applied in population genetics by \citet{tavareInferringCoalescenceTimes1997} and extended by \citet{pritchardPopulationGrowthHuman1999a}, ABC has found applications across various fields such as climate science \citep{holden2018abc}, ecology \citep{piccioni2022calibration}, and epidemiology \citep{sissonHandbookApproximateBayesian2018}. The key idea behind ABC is to simulate data under the prior predictive distribution and retain simulations $\simdata$ that are close to the observed data $\obsdata$, thereby avoiding explicit likelihood computations. This comparison relies on a metric $d$ and a tolerance threshold $\varepsilon > 0$: we accept when $d(\obsdata, \simdata)\leq\varepsilon$. ABC methods are thus particularly useful for highly complex models, for which it is difficult to calculate the likelihood but easy to simulate synthetic data.

Challenges arise for large datasets and for large parameter space. 
To reduce the dimensionality of the data, summary statistics are typically employed. Practitioners aim to select informative summaries and to choose $\varepsilon$ as small as computationally feasible, as tighter thresholds yield posterior approximations of higher fidelity. The problem of constructing such summaries has received considerable attention. For example, \citet{fearnheadConstructingSummaryStatistics2012} proposed an inferential criterion for summary statistic selection to improve posterior accuracy. Many applications rely on a small number of expert-informed summary statistics.

Significant methodological progress has been made by integrating Monte Carlo techniques into ABC, to facilitate the exploration of the parameter space. Among these, Markov Chain Monte Carlo (MCMC) and Sequential Monte Carlo (SMC) strategies have been particularly influential. \citet{marjoramMarkovChainMonte2003a} proposed ABC-MCMC, which uses a Markov chain to guide parameter proposals toward regions of high posterior mass. Later, \citet{sisson2007sequential} introduced the Partial Rejection Control (PRC-ABC) algorithm, adopting a particle-based perspective. However, this method suffered from biased weighting issues \citep{sissonCorrectionSissonSequential2009}. 
To resolve these limitations, \citet{beaumontAdaptiveApproximateBayesian2009b} proposed ABC-PMC, which combines importance sampling and adaptive kernel scaling. This innovation led to improved efficiency and more accurate posterior approximations. Subsequently, \citet{delmoralAdaptiveSequentialMonte2012} provided a theoretical framework for ABC-SMC using backward kernels and adaptive thresholding, which further enhanced robustness and convergence. These algorithmic refinements, particularly in the context of SMC, have made ABC applicable to increasingly complex models and larger datasets. Still, the field continues to evolve, as illustrated by recent work on guided proposals and adaptivity \citep{picchini2024guided}.

Despite these advances, standard ABC approaches remain challenged by high-dimensional problems and parameter dependencies. To address this, Gibbs-inspired strategies have been proposed, which capitalize on model structure to update parameters conditionally. \citet{clarte2021componentwise} and \citet{rodrigues2020likelihood} introduced independently a likelihood-free Gibbs sampling scheme where each parameter block is updated using its approximate conditional distribution via ABC. 
\citet{clarte2021componentwise} established theoretical convergence guarantees for ABC-Gibbs under partial independence, demonstrating substantial gains in efficiency and approximation quality, particularly in hierarchical contexts. These contributions expand the ABC toolbox toward more scalable and modular inference schemes.

Motivated by these developments, we turn our attention to a specific class of hierarchical models commonly encountered in applied settings where observations are naturally grouped into independent subpopulations or \emph{compartments}. 
As a motivating example, to which we return in Section \ref{sec:sir}, consider the popular Susceptible-Infectious-Recovered model in epidemiology: we wish to apply this model to the early phases of the COVID epidemic in France, with data observed locally at the level of each of $94$ \emph{d\'epartements}; initial exploration shows that certain parameters are global (constant across the whole country) but others are local (they vary between the départements).   
This structure suggests a new inference framework that exploits exchangeability across compartments.

More generally, we consider a setting where the data are divided into $K$ compartments. In each compartment $k = 1, \dots, K$, we observe $n$ data points $\obsdata_k = (\obsdataUn_k^1, \dots, \obsdataUn_k^n) \in \dataspace$, which depend on a shared global parameter $\globparam$ (with prior $\pi_\globparam$) and a compartment-specific local parameter $\locparam_k$. We assume that $(\mu_k, \obsdata_k)_{k=1}^K$ are exchangeable conditional on $\globparam$; a typical application is when the local parameters are independent and identically distributed $\mu_1,\ldots,\mu_K\sim\pi_\mu$. Collectively, the full dataset is denoted by $\obsdata := (\obsdata_1, \dots, \obsdata_K) \in \dataspace^K$, and the parameter vector by $\param = (\globparam, \locparam_{1}, \dots, \locparam_{K}) \in \parset$, with total dimension $d = K\dloc + \dglob$, where $\dloc$ and $\dglob$ are respectively the dimensions of the local and global parameters. 
Such a hierarchical model is illustrated in Figure \ref{fig:hier}.

Let $\mathcal S_K$ be the symmetric group over $\{1,\ldots,K\}$. 
We define the permuted parameter vector $\param_{\sigma} = (\globparam, \locparam_{\sigma(1)}, \dots, \locparam_{\sigma(K)})$, where $\sigma \in \mathcal{S}_K$ is a permutation of the compartments. We assume that each $\obsdata_k$ is generated independently from a common likelihood function $g$, conditional on its corresponding parameters. In this context, exchangeability implies that the joint data distribution is invariant under permutations: $f(\obsdata\mid \param) = \prod_{k = 1}^K g(\obsdata_k\mid \globparam,\locparam_k) = f(\obsdata_{\sigma} \mid \param_{\sigma})$ for any $\sigma \in \mathcal{S}_K$. 

We further assume that the likelihood $g$ is intractable or computationally costly, though we can simulate from it. That is, given $(\globparam,\locparam_k)$, we can generate pseudo-data $\obsdata_k$. By abuse of notation, we refer to the global simulator as the global likelihood $f$ throughout.


Building on the exchangeability property, the main innovation of our work is to simulate synthetic data, then apply a well-chosen permutation to the compartments before checking the ABC acceptance criterion, yielding our \emph{permABC} algorithm, introduced in Section~\ref{secabc_to_permabc}. This strategy increases the acceptance probability by several orders of magnitude. 
We consider sequential improvements in Section~\ref{sec:sequential}: we first introduce permABC-SMC, then  define new sequences of intermediate distribution for sequential strategies (over-sampling and under-matching); they are made possible by our use of permutations, and they improve efficiency and robustness in challenging regimes such as outliers or model mis-specification. Section~\ref{sec:numeric} provides a thorough empirical assessment of the proposed methods, including comparisons with classical ABC techniques, synthetic experiments highlighting the gains of permutation-based inference, and a real-world application to COVID-19 epidemic data in France.

\begin{figure}[ht]
\begin{center}
\begin{tikzpicture}
\matrix (mat) [matrix of math nodes, column sep=10pt, row sep=0pt, name=mat] {
     & & \pi_\mu (\cdot) & & \\[23pt]
     \mu_1 & \mu_2 & \ldots &   & \mu_K \\[20pt] 
     \mathbf{y}_1 & \mathbf{y}_2 & \ldots &  & \mathbf{y}_K \\[20pt] 
     & & \beta  & & \\[20pt]
     & & \pi_\beta (\cdot) & & \\
};

\foreach \column in {1, 2, 5}
{   
  \draw[->,>=latex] ([yshift=-2pt]mat-1-3) -- ([yshift=2pt]mat-2-\column.north);
  \draw[->,>=latex] ([yshift=-2pt]mat-2-\column) -- ([yshift=2pt]mat-3-\column);
  \draw[<-,>=latex] ([yshift=-2pt]mat-3-\column) -- ([yshift=2pt]mat-4-3);
}

\draw[<-,>=latex] ([yshift=-2pt]mat-4-3.south) -- ([yshift=2pt]mat-5-3.north);

\node[align=center, right=6em of mat-3-3, yshift=1.7em, inner sep=5pt] (exchangeable) 
    {\small $(\mu_k, \mathbf{y}_k)_{k=1}^K$ are exchangeable\\ \small conditionally on $\beta$};

\end{tikzpicture}
\end{center}
\caption{Hierarchical model framework with exchangeable priors: $\mu_k \sim \pi_\mu(\cdot)$, $\beta \sim \pi_\beta(\cdot)$.}
\label{fig:hier}
\end{figure}
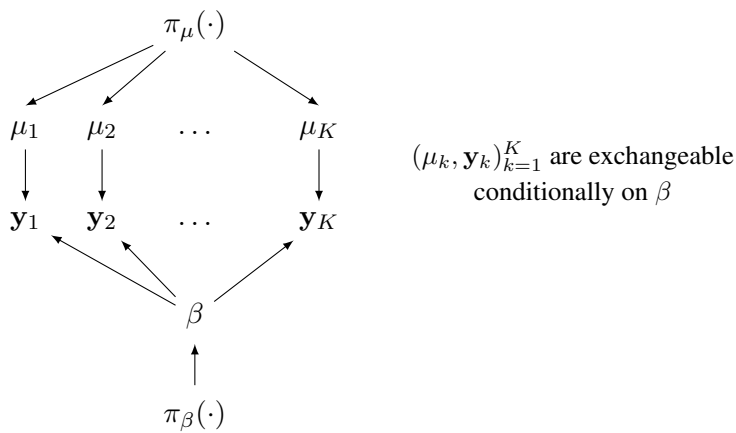

\section{From ABC to permABC}
\label{secabc_to_permabc}

\subsection{Introduction to ABC}

The standard rejection ABC algorithm, often referred to as \emph{Vanilla ABC} (see Algorithm~\ref{alg:abc_vanilla}), proceeds by drawing parameters from the prior distribution, $\param \sim \pi(\cdot)$, and then simulating synthetic data from the likelihood, $\simdata \sim f(\cdot \mid \param)$. The pair $(\param, \simdata)$ is accepted if the summary statistics of the simulated data are sufficiently close to those of the observed data, that is, if $d\{\eta(\obsdata), \eta(\simdata)\} \leq \varepsilon$ for some tolerance level $\varepsilon > 0$ and some summary statistic $\eta$. This procedure yields samples from the approximate joint distribution $\pi_{\varepsilon}\{\param,\simdata \mid \eta(\obsdata)\}$; we marginalize out $\simdata$ to get the ABC posterior

\begin{equation}
    \label{eqabc}
    \begin{split}
        \pi_{\varepsilon}\{\param \mid \eta(\obsdata)\} &\propto
        \int_{\dataspace^\Clust} \pi_{\varepsilon}\{\param, \simdata \mid \eta(\obsdata)\} \, d\simdata \\\\
        &\propto \pi(\param) \int_{A_{\obsdata,\varepsilon}} f(\simdata \mid \param) \, d\simdata,
    \end{split}
\end{equation}
where $A_{\obsdata,\varepsilon} = \{\simdata \in \dataspace^\Clust \mid d\{\eta(\obsdata), \eta(\simdata)\} \leq \varepsilon\}$ denotes the acceptance region.

Combining an informative summary statistic $\eta$ with a small tolerance $\varepsilon$ typically leads to a good approximation of the true posterior distribution, so that $\pi_{\varepsilon}\{\param \mid \eta(\obsdata)\} \approx \pi(\param \mid \obsdata)$. This approximate posterior—often called a \emph{pseudo-posterior}—is obtained by replacing the true likelihood $f(\obsdata \mid \param)$ with the integral of the likelihood over the neighborhood $A_{\obsdata,\varepsilon}$ of the observed data, effectively using it as a surrogate for the intractable likelihood.

Throughout this work, we assume for clarity that the data are compared directly, i.e., $\eta = \mathrm{Id}$. Under this assumption, the acceptance region simplifies to the $\varepsilon$-ball $B_{\varepsilon}(\obsdata)$ centered at $\obsdata$ with respect to the distance $d$. Nonetheless, the proposed method naturally extends to the use of low-dimensional summary statistics.

In our numerical examples, we take $\dataspace = \mathbb R^n$ and use the distance $d(\cdot,\cdot)$ between two datasets in \( \dataspace^K \), each consisting of \( K \times n \) observations, defined as follows: 
\begin{equation}
    \label{eqdistance}
    d(\obsdata, \simdata)^2 = {\sum_{k=1}^K w_k^2 \| \obsdata_k - \simdata_k \|_2^2},
\end{equation}
where \( \| \cdot \|_2 \) denotes the Euclidean norm, and the weights \( w_k \) are used to control the relative influence of each compartment in the distance computation. The weight vector \( w_{1:K} \) can either be manually specified by the user or selected adaptively to prevent any single compartment from dominating the distance, as suggested by \citet{prangle2017adapting}.

This naive ABC algorithm faces several limitations, including poor scalability in the parameter space and an exponentially increasing computational cost due to the declining acceptance rate as \( \varepsilon \) decreases. In the context of the hierarchical model of Figure \ref{fig:hier}, the acceptance rate further diminishes as the number of compartments \( K \) increases. Simulating a dataset \( \simdata \) from the prior predictive distribution that matches all compartments of \( \obsdata \) within a small tolerance \( \varepsilon \) can be particularly challenging. 

\begin{algorithm}[ht]
\caption{Vanilla Approximate Bayesian Computation}
\label{alg:abc_vanilla}
\textbf{Input: }observed dataset $\obsdata$, number of iterations $N$ , threshold $\eps>0$, summary statistic $\eta$.\\
\textbf{Output: }a sample $(\param^{(1)}, \dots, \param^{(N)})$ from $\pi_\eps\{\cdot\mid\eta(\obsdata)\}$.\\

\BlankLine

\For{$i = 1,\dots, N$}{
    \Repeat{$d\{\eta(\simdata^{(i)}),\eta(\obsdata)\})\leq \eps$}{
        $\param^{(i)} \sim \pi(\cdot)$\;
        $\simdata^{(i)}\sim f(\cdot\mid\param^{(i)})$\;
        }
    }
\end{algorithm}

\subsection{Motivations of permABC}

A common approach in ABC to increase the acceptance rate is to exploit the exchangeability of the compartments by using a permutation-invariant summary statistic, such as the sorted vector. For instance, when we have a one-dimensional observation (or statistic) per compartment, $\obsdata \in \mathbb{R}^K$, it is natural to use the ordering $\eta(\obsdata) = \text{sort}(\obsdata)$ as a summary. This reduces the effective complexity from $K!$ permutations to a canonical representation, thereby simplifying the comparison between datasets.

Moreover, in the univariate case, sorting is known to minimize the distance over all permutations of simulated data. Specifically, $d\{\text{sort}(\obsdata), \text{sort}(\simdata)\} = \min_{\sigma \in \mathcal{S}_K} d(\obsdata, \simdata_{\sigma})$, where $\mathcal{S}_K$ denotes the set of all permutations of $\{1,\dots,K\}$. This property underlies several recent developments in ABC, including those based on optimal transport and Wasserstein distances \citep{bernton2019approximate}, where sorting plays a central role in defining distances between empirical distributions.

However, this elegant result no longer holds in the multivariate case. When each $\obsdata_k$ is a vector in $\mathbb{R}^n$ with $n > 1$, there is no natural way to define a deterministic and permutation-optimal ordering. As a result, we cannot rely on sorting to minimize $d(\obsdata, \simdata_{\sigma})$ in general, and alternative strategies are needed as soon as the dimension is at least 2. Instead, we are compelled to solve the optimization problem directly, which leads us to the following permABC acceptance criterion:

\begin{equation}
    \label{eqopti}
    \min_{\sigma \in \mathcal{S}_K} d(\obsdata, \simdata_{\sigma}) \leq \varepsilon.
\end{equation}

This condition can be interpreted as accepting if our simulated dataset $\simdata$ matches, up to a permutation, the observed dataset $\obsdata$, and it effectively enhances the acceptance rate by a factor of $K!$ when $\varepsilon$ is small. This optimization task can be formulated as a two-dimensional linear assignment problem, or equivalently a bipartite matching problem, and is efficiently solved using a modified Jonker–Volgenant algorithm \citep{jonker1987shortest}, with a computational complexity of $\mathcal{O}(K^3)$ \citep{crouseImplementing2DRectangular2016a}.
While this reduces the combinatorial cost from $K!$ to polynomial time, the $\mathcal{O}(K^3)$ complexity remains substantial. Nevertheless, it makes problems with up to $K = 100$ compartments tractable in practice (see the real-world application in Section~\ref{sec:sir}).

However, this approach does not yield samples from the pseudo-posterior defined in \eqref{eqabc}, but rather from the following permutation-invariant version:

\begin{equation} 
    \label{eqpermabc_tilde}
    \begin{split}
        \Tilde \pi_{\varepsilon} (\param \mid \obsdata)  & \propto \int_{\dataspace^K} \Tilde \pi_{\varepsilon} (\param , \simdata \mid \obsdata) \, d\simdata \\
        & \propto \pi(\param) \int_{\Tilde A_{\obsdata,\varepsilon}} f(\simdata \mid \param) \, d\simdata,
    \end{split}
\end{equation}
where the acceptance region is defined as
\[
\tilde{A}_{\obsdata,\varepsilon} = \left\{\simdata \in \dataspace^K \,\middle|\, \min_{\sigma \in \mathcal{S}_K} d(\obsdata, \simdata_{\sigma}) \leq \varepsilon \right\} = \bigcup_{\sigma \in \mathcal{S}_K} B_{\varepsilon}(\obsdata_{\sigma}),
\]
with \( B_{\varepsilon}(\obsdata_{\sigma}) \) denoting the \( \varepsilon \)-ball centered at \( \obsdata_{\sigma} \) for the distance \( d \).

In this formulation, the likelihood \( f(\obsdata \mid \param) \) is approximated by its integral over a neighborhood of all permutations of the observed dataset:
\[
\int_{\tilde{A}_{\obsdata,\varepsilon}} f(\simdata \mid \param) \, d\simdata \approx \sum_{\sigma \in \mathcal{S}_K} f(\obsdata_{\sigma} \mid \param).
\]

Inference in this setting targets the global parameter \( \globparam \), but is only defined up to a permutation of the local components \( \locparam_{1:K} \), effectively yielding posterior samples of the form \( \Tilde{\param} = (\globparam, \{\locparam_1, \dots, \locparam_K\})\),
where the local parameters are treated as an unordered set. This formulation is particularly suitable when the local parameters are considered nuisance variables, as in many hierarchical modeling contexts. We discuss below how to recover a pseudo-posterior of each $\mu_k$ if needed.

To our knowledge, this permutation-based version of ABC has not been formally studied in the literature. We therefore introduce a novel methodology, termed \emph{permABC}, which builds upon this idea while addressing its limitations. In particular, our method enables proper inference over the full parameter vector \( \param \), including the ordering of local components, even in regimes where \( K \) is large, and does so while preserving a high acceptance rate.

\subsection{The permABC algorithm}
\label{secpermabc_star}

Restoring identifiability among compartments is essential for enabling inference on the full parameter vector. One natural way to achieve this is by applying a suitable permutation to the accepted parameters, such that the simulated data are optimally aligned with the observed dataset. However, explicitly testing all \( K! \) permutations is computationally intractable.

Fortunately, whenever the acceptance condition \eqref{eqopti} is satisfied, there exists at least one permutation of the simulated dataset that meets the ABC criterion, namely the optimal one minimizing the distance. This permutation, denoted by \( \sigma^* = \argmin_{\sigma \in \mathcal{S}_K} d(\obsdata, \simdata_{\sigma}) \), provides a natural way to reorder the parameter vector.

In practice, the \emph{permABC} algorithm proceeds similarly to standard ABC: we sample \( \parprop \sim \pi(\cdot) \) and simulate data \( \simdata \sim f(\cdot \mid \parprop) \). If the minimum distance over all permutations satisfies the tolerance condition, we accept and return the pair \( (\parprop_{\sigma^*}, \simdata_{\sigma^*}) \). This induces a new pseudo-posterior distribution, denoted \( \pi_{\varepsilon}^*(\cdot, \cdot \mid \obsdata) \), which corresponds to the pushforward of the distribution \( \tilde{\pi}_\varepsilon \) defined in \eqref{eqpermabc_tilde} under the optimal alignment map \( \proj \):

\begin{equation}
\label{eqproj}
\pi^{*}_{\varepsilon}(\cdot, \cdot \mid \obsdata)
:= \projsharp \tilde{\pi}_{\varepsilon}(\cdot, \cdot \mid \obsdata)
\end{equation}
where
\[
\proj: (\param, \simdata) \mapsto (\param_{\sigma^*}, \simdata_{\sigma^*}).
\]
By construction, \( \proj \circ \proj = \proj \), so \( \proj \) acts as a projection. Moreover, its preimage satisfies \( \proj^{-1}(\{\param, \simdata\}) = \{(\param_{\sigma}, \simdata_{\sigma}) \mid \sigma \in \mathcal{S}_K\} \).

As a result, the projected pseudo-posterior satisfies:
\begin{equation*}
    \begin{split}
        \pi_{\varepsilon}^* (\param, \simdata \mid \obsdata) &  \propto \Tilde \pi_{\varepsilon} (\proj^{-1}(\{(\param, \simdata)\}) \mid \obsdata)\cdot \mathbb I_{\text{Im}(\proj)}(\param,\simdata)\\
        & \propto \sum_{\sigma} \Tilde \pi_{\varepsilon} (\param_{\sigma}, \simdata_{\sigma} \mid \obsdata)\cdot \mathbb I_{\text{Im}(\proj)}(\param,\simdata)\\
        & \propto \tilde \pi_{\varepsilon}(\param, \simdata\mid \obsdata) \cdot \mathbb I_{\text{Im}(\proj)}(\param,\simdata).
    \end{split}
\end{equation*}

\noindent which enforces a constraint that accepted samples must lie in the image of the projection, i.e., must be optimally aligned with the observed data.

This additional constraint effectively restricts the integration domain to a subset of datasets that are optimally ordered with respect to \( \obsdata \). We define this constrained space as
\[
\dataspace_{\obsdata}^{*K} = \left\{ \simdata \in \dataspace^K \mid d(\obsdata, \simdata) = \min_{\sigma \in \mathcal{S}_K} d(\obsdata, \simdata_{\sigma}) \right\}.
\]
Consequently, the marginal pseudo-posterior can be expressed as:
\begin{equation}
\label{eqpermabc_star}
\pi_{\varepsilon}^*(\param \mid \obsdata) \propto \pi(\param) \int_{A_{\obsdata,\varepsilon}^*} f(\simdata \mid \param) \, d\simdata,
\end{equation}
where the new acceptance region is 
\[
A_{\obsdata,\varepsilon}^* = B_{\varepsilon}(\obsdata) \cap \dataspace_{\obsdata}^{*K}.
\]

The \emph{permABC} algorithm (see Algorithm~\ref{alg:permABC_star}) therefore defines a different approximation of the posterior distribution than classical ABC methods. In particular, it yields a different posterior approximation  \( \pi_{\varepsilon}^*(\cdot \mid \obsdata) \), which we shall compare to the vanilla ABC approximation \( \pi_{\varepsilon}(\cdot \mid \obsdata) \).

\begin{algorithm}[ht]
\caption{permABC Vanilla}
\label{alg:permABC_star}
\textbf{Input: }observed dataset $\obsdata$, number of iterations $N$, threshold $\varepsilon>0$.\\
\textbf{Output: }a sample $(\param^{(1)}, \dots, \param^{(N)})$ from the pseudo-posterior distribution $\pi_{\varepsilon}^*(\cdot\mid \obsdata)$.\\

\BlankLine

\For{$i = 1,\dots, N$}{
    \Repeat{$d(\obsdata, \simdata_{\sigma^*}) \leq \eps$}{
        $\parprop \sim \pi(\cdot)$\;
        $\simdata \sim f(\cdot \mid \parprop)$\;
        $\sigma^* = \argmin_{\sigma \in \mathcal{S}_K} d(\obsdata, \simdata_{\sigma})$\;
    }
    $\param^{(i)} = \param'_{\sigma^*}\;$
}
\end{algorithm}

\subsection{Comparison with ABC}
\label{sec:comparison}

The permABC method introduced in the previous section (Algorithm~\ref{alg:permABC_star}) differs from standard ABC in a key aspect: instead of accepting all permutations that satisfy the ABC criterion, it retains only the optimal one—namely, the permutation that minimizes the discrepancy defined in Equation~\eqref{eqopti}. This choice, motivated by computational efficiency, alters the pseudo-posterior distribution by constraining the pseudo-data to lie within $\dataspace^{*K}_{\obsdata}$, as previously described.

In contrast, a variant equivalent to standard ABC would consist in uniformly sampling a permutation among those satisfying the ABC acceptance condition and appropriately reweighting the contribution of each accepted sample by the size of this set. However, enumerating this set incurs a prohibitive $K!$ computational complexity. In Appendix~\ref{appendix_secstrat}, we propose a stratified sampling strategy—termed \emph{permABC-strat}—to efficiently estimate this cardinality and recover the standard pseudo-posterior.

To clarify the connection between permABC and standard ABC, we establish a sufficient condition under which both methods coincide. Specifically, we prove that when the tolerance $\varepsilon$ is small enough, the set of permutations satisfying the ABC criterion contains at most one element. We formalize this in the following lemma:

\begin{lemma}
\label{lempermABC_star}
Let $\obsdata \in \dataspace^K$ denote the observed dataset, and define the discrepancy function $d(\cdot, \cdot)$ as in Equation~\eqref{eqdistance}. Let
\[
\varepsilon^* := \min_{\sigma \neq \mathrm{Id}} \frac{1}{2} d(\obsdata, \obsdata_\sigma).
\]
Then for any $\varepsilon < \varepsilon^*$ and any simulated dataset $\simdata \in \mathbb{R}^{K \times n}$, at most one permutation satisfies the ABC criterion:
\[
\mathrm{Card} \left( \left\{ \sigma \in \mathcal{S}_K : d(\obsdata, \simdata_\sigma) \leq \varepsilon \right\} \right) \leq 1.
\]
\end{lemma}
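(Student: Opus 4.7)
The plan is to prove the lemma by contradiction: assume two distinct permutations both satisfy the ABC criterion and derive a violation of the definition of $\varepsilon^*$. The core tool will be the triangle inequality for $d$, combined with the fact that the weighted Euclidean distance in Equation~\eqref{eqdistance} is invariant under \emph{simultaneous} permutation of its two arguments (assumed here, as is implicit in the statement, with constant weights $w_k$, or more generally whenever the weight vector is invariant under the permutations considered).

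First, I would suppose there exist $\sigma_1 \neq \sigma_2$ in $\mathcal{S}_K$ with $d(\obsdata, \simdata_{\sigma_1}) \leq \varepsilon$ and $d(\obsdata, \simdata_{\sigma_2}) \leq \varepsilon$. Introduce $\tau := \sigma_2^{-1} \sigma_1$, which is a non-identity permutation. Applying $\tau^{-1}$ simultaneously to both arguments of $d(\simdata_{\sigma_1}, \obsdata_\tau)$ and invoking permutation invariance of $d$ gives
\[
d(\simdata_{\sigma_1}, \obsdata_\tau) = d(\simdata_{\sigma_1 \tau^{-1}}, \obsdata) = d(\simdata_{\sigma_2}, \obsdata) \leq \varepsilon .
\]
Then the triangle inequality yields
\[
d(\obsdata, \obsdata_\tau) \;\leq\; d(\obsdata, \simdata_{\sigma_1}) + d(\simdata_{\sigma_1}, \obsdata_\tau) \;\leq\; 2\varepsilon .
\]

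Finally, since $\tau \neq \mathrm{Id}$, the definition of $\varepsilon^*$ forces $d(\obsdata, \obsdata_\tau) \geq 2\varepsilon^*$. Combined with the previous display, this gives $2\varepsilon \geq 2\varepsilon^*$, i.e., $\varepsilon \geq \varepsilon^*$, contradicting the hypothesis $\varepsilon < \varepsilon^*$. Hence at most one permutation can satisfy the acceptance criterion.

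The only subtle point—and the main conceptual obstacle—is step that rewrites $d(\simdata_{\sigma_1}, \obsdata_\tau)$ as $d(\simdata_{\sigma_2}, \obsdata)$: it relies on the invariance $d(x_\rho, y_\rho) = d(x, y)$, which for the weighted distance~\eqref{eqdistance} requires the weights $w_k$ to be equal (or, more generally, invariant under $\rho$). If this were not assumed, one would need either an assignment-problem rephrasing of the triangle inequality with a reweighted intermediate term, or a modification of $\varepsilon^*$ using a supremum over weight-reweighted discrepancies; but in the setting of the lemma the equal-weight case is the natural reading, and the short argument above then suffices.
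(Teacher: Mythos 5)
Your proof is correct and takes essentially the same route as the paper's: a triangle-inequality argument showing that two accepting permutations would force $d(\obsdata, \obsdata_\tau) \leq 2\varepsilon < 2\varepsilon^*$ for some $\tau \neq \mathrm{Id}$, contradicting the definition of $\varepsilon^*$ (the paper phrases this via the reverse triangle inequality from the optimal permutation, taken without loss of generality to be the identity). The simultaneous-permutation invariance $d(\obsdata_\sigma, \simdata_\sigma) = d(\obsdata, \simdata)$ that you rightly flag as requiring constant (or permutation-invariant) weights is used, equally implicitly, in the paper's own proof.
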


The proofs of the preceding lemma, the following theorem, and the subsequent corollary are deferred to Appendix~\ref{appendix_proofs}.

\begin{theorem}
\label{th:permABC_star}
Let $\pi^*_{\varepsilon}( \cdot \mid \obsdata)$ be the pseudo-posterior distribution resulting from the permABC algorithm, and $\pi_{\varepsilon}( \cdot \mid \obsdata)$ that obtained via standard ABC. Then, there exists $\varepsilon^*>0$ such that for all $\varepsilon < \varepsilon^*$, we have:
\[
\pi^*_{\varepsilon}( \cdot \mid \obsdata) = \pi_{\varepsilon}( \cdot \mid \obsdata).
\]
\end{theorem}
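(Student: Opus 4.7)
The plan is to show that for $\varepsilon < \varepsilon^*$ the acceptance regions entering the two pseudo-posteriors coincide, i.e.\ $A^*_{\obsdata,\varepsilon} = B_\varepsilon(\obsdata)$, which immediately identifies the two pseudo-likelihoods in (\ref{eqabc}) and (\ref{eqpermabc_star}). Since both pseudo-posteriors are proportional to $\pi(\param)$ times the same integral, and are normalized to probability measures on the same space, equality of the distributions follows at once.

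First I would record the trivial inclusion $A^*_{\obsdata,\varepsilon} = B_\varepsilon(\obsdata)\cap \dataspace^{*K}_{\obsdata} \subseteq B_\varepsilon(\obsdata)$. For the reverse inclusion, fix any $\simdata\in B_\varepsilon(\obsdata)$, so that $d(\obsdata,\simdata)\le\varepsilon$. I need to show $\simdata\in \dataspace^{*K}_{\obsdata}$, i.e.\ that the identity permutation is a minimizer of $\sigma\mapsto d(\obsdata,\simdata_\sigma)$. This is where Lemma \ref{lempermABC_star} does all the work: the identity already lies in the set $\{\sigma\in\mathcal{S}_K : d(\obsdata,\simdata_\sigma)\le\varepsilon\}$, and the lemma asserts this set has cardinality at most one whenever $\varepsilon<\varepsilon^*$. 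Consequently every $\sigma\neq\mathrm{Id}$ satisfies $d(\obsdata,\simdata_\sigma)>\varepsilon\geq d(\obsdata,\simdata)$, proving that $\mathrm{Id}$ attains the minimum and hence $\simdata\in\dataspace^{*K}_{\obsdata}$.

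Combining the two inclusions, $A^*_{\obsdata,\varepsilon}=B_\varepsilon(\obsdata)$ for all $\varepsilon<\varepsilon^*$. Substituting into (\ref{eqpermabc_star}) yields
\[
\pi^*_\varepsilon(\param\mid\obsdata)\propto \pi(\param)\int_{B_\varepsilon(\obsdata)} f(\simdata\mid\param)\,d\simdata,
\]
which is exactly the right-hand side of (\ref{eqabc}) with $\eta=\mathrm{Id}$. The two normalized densities therefore coincide as distributions on $\parset$, which is the claimed identity $\pi^*_\varepsilon(\cdot\mid\obsdata)=\pi_\varepsilon(\cdot\mid\obsdata)$.

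There is no real obstacle beyond invoking Lemma \ref{lempermABC_star} correctly; the only point demanding a moment of care is passing from ``at most one $\sigma$ satisfies the $\varepsilon$-tolerance'' to ``the identity is a strict minimizer among non-identity permutations,'' which relies precisely on the strict inequality $d(\obsdata,\simdata_\sigma)>\varepsilon\ge d(\obsdata,\simdata)$ for $\sigma\neq\mathrm{Id}$. The threshold $\varepsilon^*$ produced by the lemma is exactly the one used in the theorem, so no further quantification is needed.
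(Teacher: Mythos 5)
Your proof is correct and follows essentially the same route as the paper's: both use Lemma~\ref{lempermABC_star} to show that for $\varepsilon<\varepsilon^*$ the identity is the unique permutation within tolerance, hence the minimizer, so that $A^*_{\obsdata,\varepsilon}=B_\varepsilon(\obsdata)=A_{\obsdata,\varepsilon}$ and the two pseudo-posteriors coincide. Your write-up merely makes explicit the two inclusions and the passage from ``at most one admissible permutation'' to ``the identity is the minimizer,'' which the paper's proof states more tersely.
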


This equivalence ensures that permABC retains the same asymptotic behavior as standard ABC in the limit of vanishing tolerance. We summarize this in the following corollary:

\begin{corollary}
\label{col:permabc}
If the pseudo-posterior distribution $\pi_{\varepsilon}(\cdot \mid \obsdata)$ obtained by the ABC algorithm converges to the true posterior distribution as $\varepsilon \to 0$, then the pseudo-posterior distribution $\pi_{\varepsilon}^*(\cdot \mid \obsdata)$ obtained by the permABC algorithm also converges to the true posterior distribution as $\varepsilon \to 0$.
\end{corollary}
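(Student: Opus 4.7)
The plan is to leverage Theorem \ref{th:permABC_star} directly, which asserts that the two pseudo-posteriors coincide whenever $\varepsilon < \varepsilon^*$ for some strictly positive threshold $\varepsilon^*$. Since the corollary concerns the behaviour as $\varepsilon \to 0$, the tail of the sequence of pseudo-posteriors produced by permABC matches that of standard ABC, so both must share the same limit.

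Concretely, the first step is to fix $\varepsilon^* := \min_{\sigma \neq \mathrm{Id}} \tfrac{1}{2} d(\obsdata, \obsdata_\sigma) > 0$, as in Lemma \ref{lempermABC_star}, and then to apply Theorem \ref{th:permABC_star} to obtain the pointwise equality
\[
\pi^*_{\varepsilon}(\cdot \mid \obsdata) \;=\; \pi_{\varepsilon}(\cdot \mid \obsdata) \qquad \text{for every } 0 < \varepsilon < \varepsilon^*.
\]
The second step is to take $\varepsilon \to 0$ along the restricted range $(0, \varepsilon^*)$: by hypothesis, the right-hand side converges to the true posterior $\pi(\cdot \mid \obsdata)$ (in whatever mode of convergence is being considered, whether pointwise on densities, in total variation, or weakly on measures), and therefore the left-hand side does too. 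Since the limiting behaviour of a sequence is determined by any cofinal subsequence, this establishes convergence along the full range $\varepsilon \to 0$.

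The main (and essentially only) obstacle is ensuring that $\varepsilon^*$ is strictly positive, which in turn requires that no non-trivial permutation of $\obsdata$ leaves the dataset unchanged under the metric $d$. This is a mild genericity condition satisfied as soon as the compartments are pairwise distinct, a situation that occurs almost surely under any continuous data-generating mechanism; it was already used implicitly in Lemma \ref{lempermABC_star}. Apart from this observation, the corollary is a direct consequence of the theorem and requires no additional machinery.
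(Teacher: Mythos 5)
Your argument is correct and is essentially identical to the paper's own proof: both invoke Theorem~\ref{th:permABC_star} to get equality of the two pseudo-posteriors for all $\varepsilon < \varepsilon^*$ and then pass to the limit $\varepsilon \to 0$. Your added remark that $\varepsilon^* > 0$ requires the compartments of $\obsdata$ to be pairwise distinct is a sensible observation but does not change the route.
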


Thus, the efficiency gains provided by permABC come at no cost to theoretical consistency. For foundational results on ABC convergence, we refer to \citet{fearnhead2018abc}, who provides general conditions under which ABC approximations converge to the true posterior, along with error bounds.

\section{Sequential strategy with permutations}
\label{sec:sequential}

\subsection{Sequential ABC: state of the art and limitations}
\label{sec:abc_smc}

Rejection-based ABC methods, including the enhanced permABC, remain computationally expensive due to their reliance on the prior as a proposal distribution. 
Most methods in the ABC literature rely instead on more sophisticated proposals. A typical strategy is to propose from some distribution \( q(\param) \), then assign importance weights  of the form \( \pi(\param)/q(\param) \) to accepted samples to correct for the fact that they are not drawn from the prior \( \pi(\cdot) \). This produces weighted samples from the ABC pseudo-posterior \( \pi_\varepsilon(\param \mid \obsdata) \). 

Sequential importance sampling further improves efficiency by constructing a sequence of proposal distributions \( q_t(\cdot) \) over \( T \) iterations. Inspired by the Sequential Monte Carlo (SMC) framework \citep{delmoralSequentialMonteCarlo2006a}, several ABC-SMC algorithms have been developed \citep{sisson2007sequential,sissonCorrectionSissonSequential2009, toniApproximateBayesianComputation2008, beaumontAdaptiveApproximateBayesian2009b, delmoralAdaptiveSequentialMonte2012}, significantly improving convergence and computational performance.

These methods differ in their algorithmic design but follow the same principle: simulate from a sequence of intermediate ABC pseudo-posteriors \( \pi_{\varepsilon_0}, \dots, \pi_{\varepsilon_T} \), where the thresholds satisfy \( \varepsilon_0 = \infty > \varepsilon_1 > \dots > \varepsilon_T = \varepsilon \). Weights are updated iteratively, enabling efficient progression from the prior \( \pi_{\varepsilon_0}(\cdot \mid \obsdata) = \pi(\cdot) \) to the target pseudo-posterior \( \pi_\varepsilon(\cdot \mid \obsdata) \).

Here, we adopt the algorithm proposed by \citet{delmoralAdaptiveSequentialMonte2012}, widely used as a benchmark for ABC methods \citep{clarte2021componentwise}. This method combines a random walk proposal (see Section~\ref{sec:kernels}) with a Metropolis–Rosenbluth-Teller-Hastings (MRTH) acceptance step based on the ABC criterion and the MRTH ratio. 

A notable feature of this algorithm is the possibility to simulate \( M \) datasets \( z^{(1)}, \dots, z^{(M)} \) per parameter value \( \param \). 
While increasing \( M \) can reduce variance in the acceptance decision, setting \( M = 1 \) leads to better computational efficiency, as shown in practice by \citet{delmoralAdaptiveSequentialMonte2012} and in theory by \citet{bornn2017use}. 
With $M=1$, the importance weights become binary: they are equal to 1 if the particle is accepted (i.e., the distance is below \( \varepsilon \)) and 0 otherwise.

This simplification, however, induces a limitation. When weights are binary, the Effective Sample Size (ESS) reduces to the number of surviving particles. While ESS is commonly used to assess sample quality, it may not accurately reflect the true diversity of the sample, especially when the acceptance threshold is low. In such scenarios, repeated resampling steps tend to concentrate the population, leading to particle degeneracy. 
To address this issue, we propose using the \emph{unique particle rate} as a complementary diagnostic. This metric, defined as the proportion of distinct parameter values among the accepted particles, provides a more sensitive indicator of sample diversity. It offers a practical way to monitor degeneracy and evaluate the effective exploration of the pseudo-posterior during the sequential process. 
In contrast, the ABC-PMC algorithm originally introduced by \citet{sissonCorrectionSissonSequential2009} and later corrected by \citet{beaumontAdaptiveApproximateBayesian2009b} uses multinomial resampling, a practice discouraged in the SMC literature due to its inefficiency \citep{chopin2020introduction}.

In the following subsections, we propose three ways to accelerate ABC-SMC with permutations, based on three different strategies to define intermediate distributions.

\subsection{The permABC-SMC algorithm}

The adaptive SMC algorithm proposed by \citet{delmoralAdaptiveSequentialMonte2012} can be seamlessly extended to the permABC framework. By substituting the standard ABC acceptance criterion with the permutation-based criterion \eqref{eqopti}, we develop a sequential algorithm that targets the pseudo-posterior distribution \( \tilde{\pi}_\varepsilon(\param \mid \obsdata) \).

Upon reaching the stopping criterion or when \( \varepsilon_T = \varepsilon \), we apply the projection step \eqref{eqproj} to each particle in the final population \( \{(\param^{T,i}, \simdata^{T,i})\}_{i=1}^N \). This yields samples from the permuted pseudo-posterior \( \pi_\varepsilon^*(\param \mid \obsdata) \), effectively restoring identifiability among compartments.

It is essential to ensure that the proposal distributions \( q_t(\cdot) \) are adapted to the new acceptance criterion. We detail such kernels in Section~\ref{sec:kernels}, emphasizing their role in maintaining the efficiency of the sampling process under the permutation-based framework.

This approach, termed \emph{permABC-SMC}, facilitates a sequential transition from the prior to the permuted pseudo-posterior \( \pi^*_\varepsilon(\param \mid \obsdata) \). Thanks to proposal kernels tailored to the permutation-based selection rule, permABC-SMC enhances the exploration of the parameter space, particularly in high-dimensional settings where standard ABC methods may struggle.

The permABC-SMC approach is a natural way to extend permABC into a sequential method. It relies on intermediate distributions of the form $(\pi^*_{\varepsilon_t})_t$ which are similar to other sequential ABC methods, and it already leads to substantial computational gains, as shown in Section \ref{sec:numeric}. Before we examine these practical results, we introduce two innovative sequences of intermediate distributions, with novel selection criteria which are only available because we rely on permutation-based matching.

\subsection{Over-sampling}
\label{sec:permABC_OS}

The over sampling strategy leverages the flexibility of the permABC framework to sample more efficiently from the target pseudo-posterior \( \pi_\varepsilon^*(\cdot \mid \obsdata) \) by simulating more compartments than are present in the observed data. Specifically, instead of generating \( K \) pseudo-compartments, we simulate \( M > K \) compartments: that is, we consider \( \simdata \in \dataspace^M \) while the observed data remain in \( \dataspace^K \). We then accept or reject based on the best simulated subset of size $K$.

To implement this strategy, we draw \( \param = (\globparam, \locparam_1, \dots, \locparam_M) \sim \pi(\cdot) \), simulate the corresponding data \( \simdata = (z_1, \dots, z_M) \sim f(\cdot \mid \param) \), and use the following acceptance condition:
\begin{equation}
    \label{eq:os}
    \min_{\sigma \in \mathcal{A}_K^M} d(\obsdata, \simdata_{\sigma}) \leq \varepsilon,
\end{equation}
where \( \mathcal{A}_K^M \) denotes the set of injections from \( \{1, \dots, K\} \) to \( \{1, \dots, M\} \), i.e., the set of partial \( K \)-permutations of \( \{1, \dots, M\} \). As in \eqref{eqopti}, this optimization problem can be solved efficiently using the same assignment algorithm with computational complexity \( \mathcal{O}(K^2 M) \) \citep{crouseImplementing2DRectangular2016a}.

We slightly abuse notation by denoting \( f \) the joint simulator over compartments, regardless of whether the number of compartments is \( K \) (observed) or \( M \) (simulated). This is justified since the simulator is defined componentwise via a common mechanism \( g \), applied independently to each compartment: formally, \( f(\simdata \mid \param) := \prod_{m=1}^M g(z_m \mid \globparam, \locparam_m) \).

Analogously to the case \( M = K \), this procedure defines a pseudo-posterior distribution \( \tilde{\pi}_\varepsilon^M(\cdot \mid \obsdata) \), and its projected version \( \pi_\varepsilon^{*M}(\cdot \mid \obsdata) \), which is obtained by applying the optimal partial permutation minimizing \eqref{eq:os} in a reasoning analogous to that of Section~\ref{secabc_to_permabc} (see Appendix~\ref{appendix_os} for details).

Interestingly, the marginals in $\globparam$ and $\locparam$ exhibit contrasting behaviors as $M$ varies. When $M$ is large, the acceptance condition becomes easier to satisfy regardless of the value of $\varepsilon$, leading to a joint pseudo-posterior distribution close to the prior. Since the permutation only affects the local parameters, the marginal over $\globparam$ remains nearly unchanged, and therefore close to its prior distribution. In contrast, the marginal over the local parameters becomes more concentrated. This is because, when $M$ is large, the optimization in \eqref{eq:os} explores a wider set of potential matches, and the local parameters $\locparam_{1:K}$ corresponding to the best matching compartments are those that generate data most similar to $\obsdata$ under a given $\globparam$. As a result, these local components become tightly clustered around values that are locally optimal with respect to the matching criterion. As $M$ decreases toward $K$, this concentration fades, and the marginal over $\locparam$ gradually approaches the full permABC pseudo-posterior. This behavior is illustrated in Figure~\ref{fig:os}, and further discussed in Appendix~\ref{appendix_os}.

This new distribution family is particularly appealing, as it introduces a natural sequence of auxiliary distributions that converge to the target distribution \( \pi_\varepsilon^*(\cdot \mid \obsdata) \). We define a decreasing sequence \( M_0 > M_1 > \cdots > M_T = K \), keeping \( \varepsilon \) fixed, such that the acceptance condition \eqref{eq:os} becomes increasingly selective until it recovers the standard permABC criterion \eqref{eqopti} when \( M = K \). 
To navigate this sequence \( \left \{\tilde{\pi}_\varepsilon^{M_t}, t = 0, \dots, T \right\} \), we adapt the ABC-SMC framework of \citet{delmoralAdaptiveSequentialMonte2012}, as discussed in Section~\ref{sec:abc_smc}. At each stage, we apply the optimal partial permutation \( \sigma^* \) to each accepted particle to obtain a sample from \( \pi_\varepsilon^*(\cdot \mid \obsdata) \). This results in a new algorithm we call \emph{permABC-Over Sampling (permABC-OS)}.

This method  allows convergence to the target distribution in a reduced number of iterations. However, it also introduces several challenges that we now discuss. The first requirement for the success of the method is to initialize the algorithm with a sufficiently large and diverse population of accepted particles. To achieve this, one must choose $M_0$ and $\varepsilon$ in a coordinated manner. Specifically, since the minimum in \eqref{eq:os} decreases with $M$, a smaller value of $\varepsilon$ necessitates a larger $M_0$ in order to maintain a reasonable acceptance rate at the first iteration. In Appendix~\ref{appendix_os}, we propose a method to calibrate $\varepsilon$ as a function of $M_0$, thereby allowing users to set $M_0$ under a fixed simulation budget.

A second challenge attached to this method is the risk of extinction of the particle population during the sequential iterations. Specifically, when transitioning from population $t$ with $M_t$ compartments to population $t+1$ with $M_{t+1} < M_t$, the acceptance condition \eqref{eq:os} is evaluated on a reduced subset of simulated compartments. If compartments that previously matched observed ones are no longer considered, the particle may be rejected—even though it was previously accepted. To mitigate this, the sequence $M_t$ must be selected carefully. We provide a practical heuristic for this in Appendix~\ref{appendix_os}. 

Even under a careful design of the sequence $M_t$, a large portion of the population may be lost in this transition. To address this issue, we introduce a duplication mechanism whereby each particle is replicated across multiple random permutations. This increases the probability that at least one version of the particle survives the next acceptance test, without requiring additional simulations. The duplication strategy, along with heuristic guidelines for both selecting the sequence \( (M_t) \) and tuning \( \varepsilon \), is also detailed in Appendix~\ref{appendix_os}. 

\begin{figure}[ht]
    \centering
    \includegraphics[scale=.5]{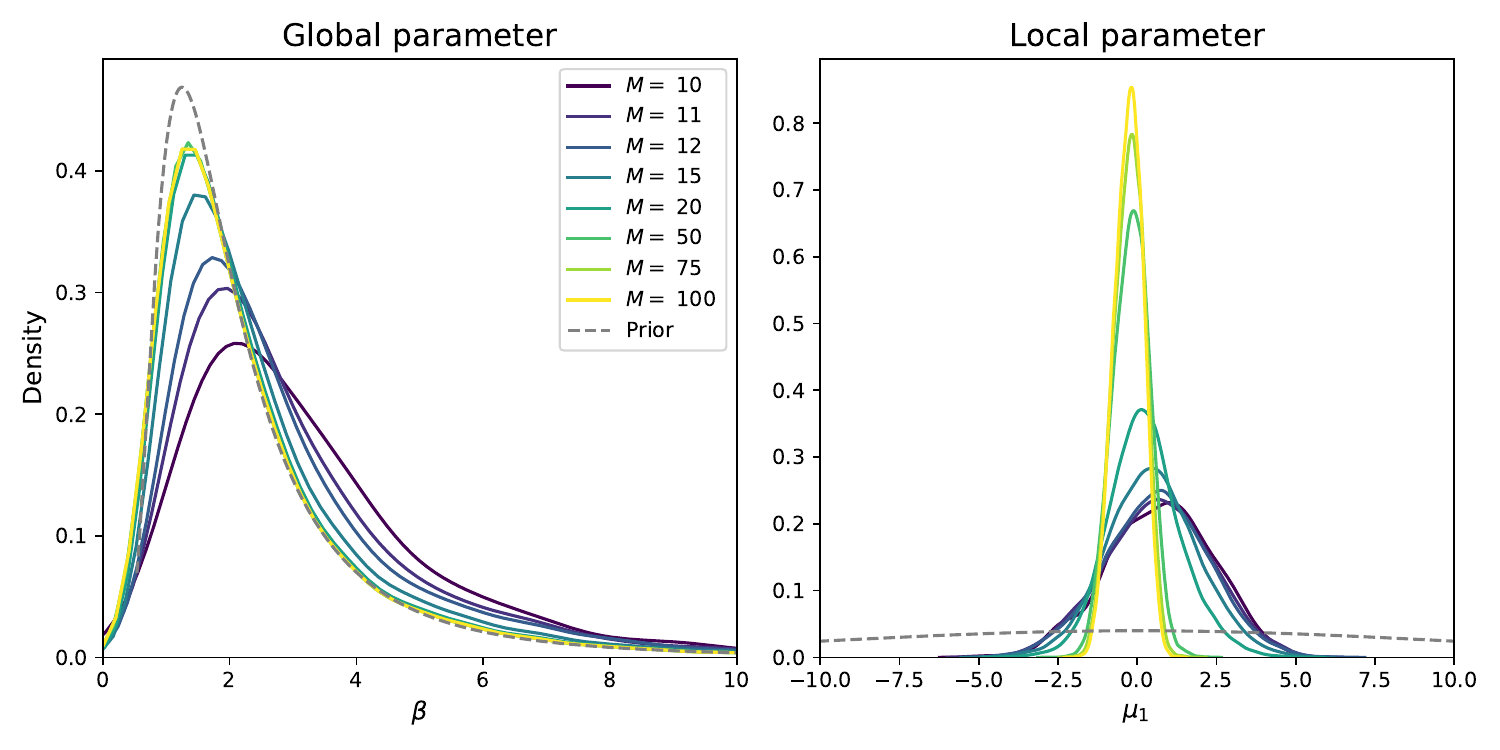}
    \caption{Evolution of the pseudo-posterior \( \pi_\varepsilon^{*M} \) as \( M \) decreases from \( M_0 = 150 \) (yellow) to \( M_T = K = 10 \) (purple) for a Gaussian toy model.}
    \label{fig:os}
\end{figure}

\subsection{Under-matching}
\label{sec:undermatching}

The under matching strategy—referred to as \emph{permABC-Under Matching (permABC-UM)}—adopts a  perspective opposite to over sampling. 
In this approach, we simulate synthetic data with the same number $K$ of compartments as the observed data; but at intermediate steps, we no longer require the simulated data to match all $K$ observed compartments. Instead, we simulate $K$ compartments but only require $L<K$ to match. The resulting acceptance criterion is defined as:
\begin{equation}
    \label{eqopti_um}
    \min_{\sigma \in \mathcal{A}_L^{K}, \tilde{\sigma} \in \mathcal{A}_L^{K}} d(\obsdata_{\tilde \sigma}, \simdata_{\sigma}) \leq \varepsilon,
\end{equation}
where \( \sigma \) and \( \tilde{\sigma} \) denote the sets of \( L \) selected compartments from the simulated and observed datasets, respectively. 

To solve this optimization problem efficiently, we extend the cost matrix by adding \( K - L \) dummy rows and \( K - L \) dummy columns with zero cost. These artificial entries ensure that \( K - L \) assignments are trivially matched, leaving the algorithm to find the best \( L \) true matches. This reduces the partial matching problem to a standard linear sum assignment problem on a square matrix of size \( (2K - L) \times (2K - L) \), which can be solved in \( \mathcal{O}\{(2K-L)^3\} \) time using classical algorithms \citep{crouseImplementing2DRectangular2016a}. 

This formulation allows for the algorithm to search over all possible \( L \)-sized subsets in both datasets, thereby relaxing the matching constraint and increasing flexibility in early iterations.

Although the distance \( d \) in Equation~\eqref{eqdistance} is defined for full datasets in \( \dataspace^K \), it naturally extends to subsets. In this context, given \( \sigma, \tilde \sigma \in \mathcal{A}_L^K \), the restricted distance is such that
\[
d(\obsdata_{\tilde \sigma}, \simdata_{\sigma})^2 = {\sum_{k=1}^L w_{\tilde \sigma(k)}^2 \| \obsdata_{\tilde \sigma(k)} - \simdata_{\sigma(k)} \|_2^2}.
\]
This formulation implies that compartments with higher weights have a greater impact on the acceptance condition, and are thus harder to match. This property can be exploited to prioritize compartments deemed more informative or robust, by appropriately adjusting the weights \( w_k \).

As in other variants of permABC, we employ a SMC scheme through a sequence of intermediate distributions \( \tilde{\pi}_\varepsilon^{L_t} \), corresponding to an increasing schedule \( L_0 < L_1 < \dots < L_T = K \). This gradual tightening of the matching criterion provides a smooth path toward the final target \( \pi^*_\varepsilon(\cdot \mid \obsdata) \), where all compartments are required to match.

Nevertheless, permABC-Under Matching faces challenges similar to those of the Over Sampling approach. In particular, it is crucial to select the initial tolerance \( \varepsilon \) appropriately, so as to ensure a sufficiently large and diverse first population. This requires calibrating \( \varepsilon \) as a function of \( L_0 \), the initial number of matched compartments, towards guaranteeing a high survival rate at the start of the algorithm.

Moreover, when the model is misspecified for even a single observed compartment, the final transition from \( L_{T-1} = K - 1 \) to \( L_T = K \) can lead to a sharp drop in the acceptance rate, resulting in severe particle degeneracy or even extinction. This makes the design of the sequence \( (L_t) \) and the level \( \varepsilon \) particularly critical for the success of the method. 
We show in Section \ref{sec:osum} that a well-designed under matching approach can increase performance in the presence of outliers.

These aspects are discussed in Appendix~\ref{appendix_um}, along with practical guidelines for choosing \( L_t \) and calibrating \( \varepsilon \) to avoid degeneracy. Situations where permABC-Under Matching is especially advantageous, such as in the presence of outliers or structural model misspecification, are further investigated in Section~\ref{sec:osum}.

\subsection{Proposal Kernels for Permuted Inference}
\label{sec:kernels}

In the ABC-SMC framework of \citet{delmoralAdaptiveSequentialMonte2012}, the MCMC kernel \( K_t \) at iteration \( t \) uses a Gaussian random walk proposal \( q_t \) of the form:
\[
q_t(\cdot \mid \param) = \mathcal{N}\bigl\{\param, \operatorname{diag}\bigl(\tau_t^2\bigr)\bigr\},
\]
where \( \tau_t^2 \in \mathbb{R}^d \) is a vector of componentwise variances. These are typically defined as:
\[
\tau_t^2 = 2 \cdot \operatorname{var}\bigl\{\bigl(\param^{t-1,i}\bigr)_i\bigr\},
\]
i.e., twice the empirical variance across particles from the previous population. A Metropolis–Hastings correction step is then applied using the ABC acceptance criterion.

In the case of permABC-SMC, the exchangeability of the local parameters introduces label switching in the particle populations, which complicates the direct computation of componentwise variances. To resolve this, we first align the local parameters across particles using their optimal permutations \( \sigma_i^* \), and compute:
\[
\tau_t^2 = 2 \cdot \operatorname{var}\bigl\{\bigl(\param_{\sigma_i^*}^{t-1,i}\bigr)_i\bigr\}.
\]
This alignment restores parameter consistency across compartments, and allows us to maintain a diagonal Gaussian kernel with meaningful componentwise variances. The proposal kernel is then given by:
\[
q_t(\cdot \mid \param, \sigma) = \mathcal{N}\bigl[\param, \operatorname{diag}\bigl\{\sigma^{-1}(\tau_t^2)\bigr\}\bigr],
\]
where \( \sigma^{-1}(\tau_t^2) \) permutes the variance vector accordingly, so that each coordinate of \( \param \) is perturbed using the variance estimated in the appropriate reference frame.

When the number of simulated compartments \( M \) exceeds the number of observed compartments \( K \), only a subset of the local parameters are matched to observations at each iteration. Let \( \mathcal{I} = \mathrm{Im}(\sigma^*) \subset \{1, \dots, M\} \) denote the indices of simulated compartments that are matched. For those indices, we reuse the corresponding variances from \( \tau_t^2 \). For the unmatched components \( k \notin \mathcal{I} \), we assign a default value:
\[
\tau_0^2 := \max_{k \in \mathcal{I}} \tau_t^2[k],
\]
which ensures exploration and numerical stability. This same default value \( \tau_0^2 \) is reused in the under-matching case below.

In the case of under matching, only a subset \( L_t < K \) of observed compartments are matched at each iteration. Let \( \sigma \in \mathcal{A}_L^K \) and \( \tilde \sigma \in \mathcal{A}_L^K \) be the injections representing matches from observed to simulated compartments and vice versa. For each observed coordinate \( k \in \{1,\dots,K\} \), we define:
\[
\tau_t^2[k] = 2 \cdot \operatorname{var}\bigl\{\bigl(\theta_{\tilde \sigma_i^{-1} \circ \sigma_i(k)}^{t-1,i}\bigr)_i\bigr\},
\]
whenever the compartment \( k \) has been matched frequently enough across the population. If this condition is not met (e.g., in early iterations or due to sparse matching), we fallback to the default value \( \tau_t^2[k] = \tau_0^2 \).

These variance assignment strategies allow us to construct robust and coherent Gaussian random walk proposals \( q_t \) for both global and local parameters across all three regimes: standard permABC, Over Sampling, and Under Matching.

\section{Numerical experiments}
\label{sec:numeric}

\subsection{Comparison between ABC and permABC}
\label{sec:abc_vs_permabc}

As discussed in Section~\ref{sec:comparison}, permABC exhibits a different behavior from standard ABC. In particular, two regimes emerge depending on the value of the threshold \( \varepsilon \): when \( \varepsilon > \varepsilon^* \), permABC yields a different approximation of the posterior compared to ABC; whereas when \( \varepsilon \leq \varepsilon^* \), permABC coincides with standard ABC and thus inherits its convergence guarantees.

To better illustrate the differences between the two approaches, we consider a simple toy model with no global parameter, defined as follows:
\[
\mu_k \sim \mathcal{U}(-2, 2), \quad \obsdata_k \sim \mathcal{U}(\mu_k - 1, \mu_k + 1), \quad k = 1, \dots, K,
\]
with \( K = 2 \), and a fixed observation \( y = (-1, 1)^\top \). 

Figure~\ref{fig:unif} highlights the two different regimes. When \( \varepsilon > \varepsilon^* \), permABC is concentrated in a region constrained by the order statistics, leading to a pseudo-posterior different from that of standard ABC. However, when the threshold satisfies \( \varepsilon \leq \varepsilon^* \), both methods recover the same pseudo-posterior. In this example, the critical value is \( \varepsilon^* = \sqrt{1/8}/2 \approx 0.177 \).

Note that, even in its vanilla form, permABC offers a significant gain in simulation efficiency. Since it effectively avoids redundant permutations during rejection, it reduces the required number of simulations by a factor of approximately \( K! \). In this simple example with \( K = 2 \), we achieve the same approximation quality (i.e., same \( \varepsilon < \varepsilon^* \)) using only half the number of simulations compared to standard ABC.

\begin{figure}[ht]
    \centering
    \includegraphics[scale=.6]{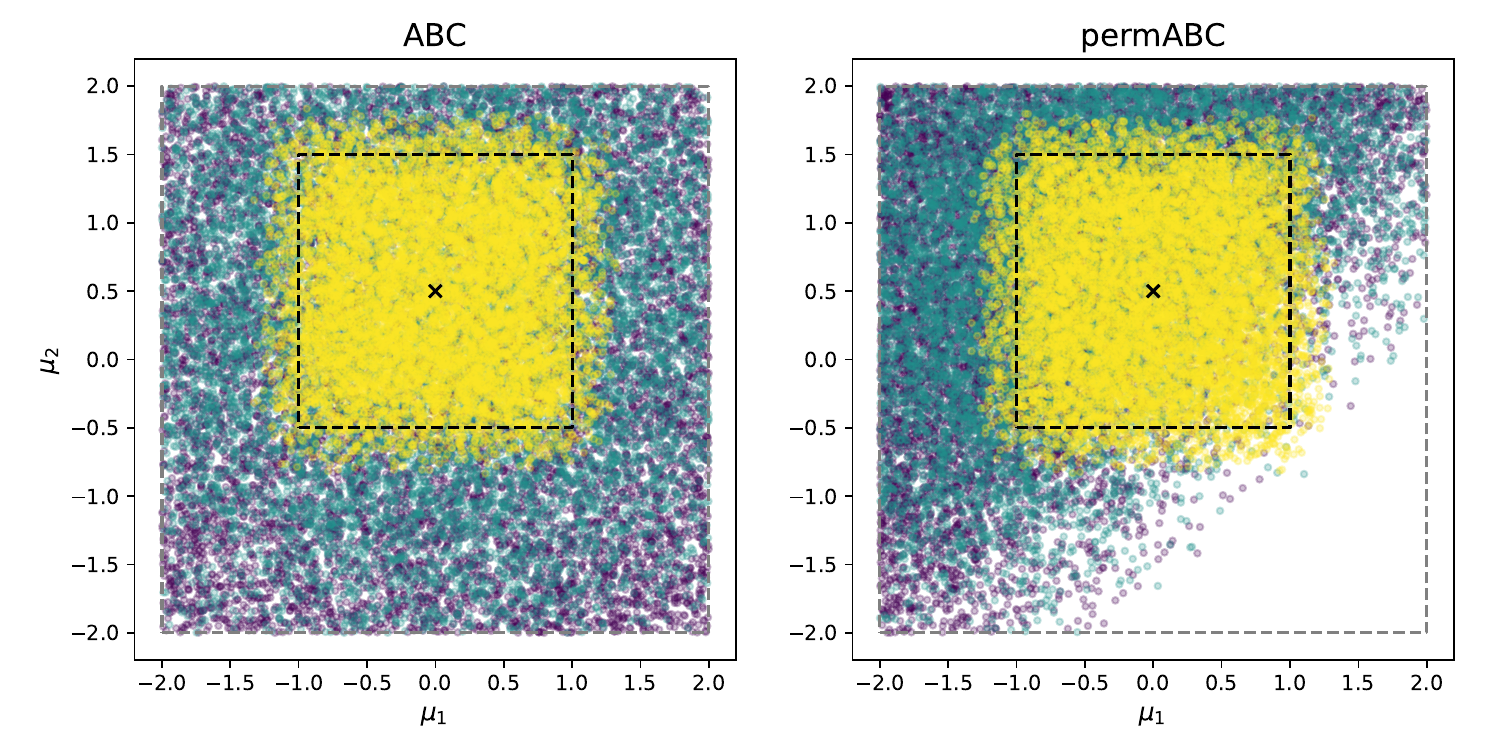}
    \caption{
        Comparison between standard ABC (left) and permABC (right) on the toy model of Section \ref{sec:abc_vs_permabc} with \( K = 2 \) and observation \( y = (-1, 1)^\top \). Each dot represents a sampled configuration \( \mu = (\mu_1, \mu_2) \), colored by the acceptance threshold used: purple for \( \varepsilon = \infty \), green for \( \varepsilon > \varepsilon^* \), and yellow for \( \varepsilon = \varepsilon^* \). The black cross denotes the observation \( y \), and the dashed box indicates the true posterior. While both methods recover the same posterior when \( \varepsilon < \varepsilon^* \), permABC avoids simulating permutations that are suboptimal at large \( \varepsilon \), resulting in a more concentrated distribution and improved efficiency. The two panels use the same computational budget.
    }
    \label{fig:unif}
\end{figure}

\subsection{Challenges of High-Dimensional ABC: A Gaussian Toy Model}

We now consider a classical hierarchical model commonly used in the Bayesian literature:
\begin{equation*}
    \label{eqgauss}
    \beta \sim \mathrm{IG}(a,b), \quad \mu_k \sim \mathcal{N}(0, s^2), \quad \obsdata_{k} = (y_{1,k}, \dots, y_{n,k}) \overset{\text{i.i.d.}}{\sim} \mathcal{N}(\mu_k, \beta), \quad k = 1, \dots, K.
\end{equation*}

This simple yet representative model provides a convenient testbed to compare the various methods discussed in this paper, and to highlight the limitations of standard ABC approaches in high-dimensional settings. All ABC methods involve a fundamental trade-off between the number of simulations \( N \), the tolerance threshold \( \varepsilon \), and the effective number of accepted samples (or the number of unique particles in sequential schemes). A more efficient method achieves a lower value of \( \varepsilon \) for a given simulation budget \( N \) and fixed final sample size.

To enable fair comparison across methods, we adopt a consistent evaluation criterion: for each experiment, we report the value of \( \varepsilon \) achieved when the algorithm yields exactly 1,000 unique particles. In this setting, the number of simulations required to obtain a single unique accepted sample serves as a proxy for the effective sample size (ESS). This pseudo-ESS provides a normalized measure of sampling efficiency, which we use throughout our benchmarks.

Rejection-based algorithms such as vanilla ABC, as well as sequential variants like ABC-SMC and ABC-PMC, are known to scale poorly with the dimension of the parameter space, here indexed by the number of groups \( K \). As \( K \) increases, the acceptance rate drops sharply, resulting in a dramatic rise in computational cost. 
In contrast, the permutation-based acceptance rule introduced in permABC significantly improves the efficiency of both rejection and sequential samplers in high-dimensional regimes. 
The symmetry structure across local parameters \( \mu_1, \dots, \mu_K \) allows permABC to reduce redundant simulations and produce sharper approximations of the target pseudo-posterior.

Figure~\ref{fig:vanilla_vs_smc} illustrates this phenomenon: the methods proposed in this work achieve lower Monte Carlo error levels using fewer simulations, as reflected by smaller effective thresholds \( \varepsilon \). These gains become increasingly pronounced as \( K \) grows, confirming the advantage of permutation-based strategies in hierarchical or exchangeable models.

\begin{figure}[ht]
    \centering
    \includegraphics[scale=.5]{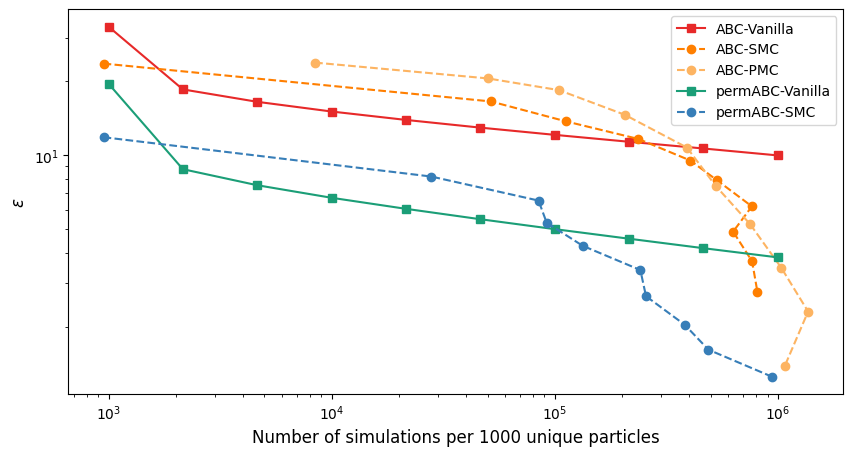}
    \caption{
        Comparison of standard ABC methods and permutation-based variants in terms of achieved tolerance \( \varepsilon \) versus computational budget. The \( x \)-axis shows the number of simulations per 1,000 unique accepted particles (log-scale), and the \( y \)-axis represents the effective tolerance \( \varepsilon \) (log-scale). Lower values of \( \varepsilon \) correspond to more accurate approximations of the pseudo-posterior. Each curve represents a different algorithm. Permutation-based methods achieve substantially lower tolerance levels for the same simulation budget.
    }
    \label{fig:vanilla_vs_smc}
\end{figure}

\subsection{Failure cases of ABC-Gibbs}
\label{sec:abcgibbs_failure}

ABC-Gibbs methods \citep{clarte2021componentwise,rodrigues2020likelihood} have been designed to handle hierarchical models similar to those we consider. They are however less robust to model dependencies. We illustrate this with a modified hierarchical model designed to highlight the limitations of ABC-Gibbs:
\begin{equation*}
    \label{eqgauss_robin}
    \beta \sim \mathcal{N}(0, 10^2), \quad \mu_k \sim \mathcal{N}(0, 10^2), \quad \obsdata_k = (\obsdata_{1,k}, \dots, \obsdata_{n,k}) \overset{\text{i.i.d.}}{\sim} \mathcal{N}(\mu_k + \beta, 1), \quad k = 1, \dots, K.
\end{equation*}
This model is deliberately over-parameterized: it includes \( K+1 \) parameters for only \( K \) observations. Despite its simplicity, the posterior distribution exhibits a strong ridge structure due to the symmetry between local parameters \( \mu_k \) and the global parameter \( \beta \). This posterior can be reliably approximated using reference MCMC software such as PyMC. 

Figure~\ref{fig:perm_vs_gibbs} shows the two-dimensional marginal posterior over \( (\mu_1, \beta) \) obtained under a fixed simulation budget. While permABC-SMC provides a conservative and coherent approximation of the target posterior—as expected for an ABC method with global proposals—ABC-Gibbs produces a truncated approximation, failing to explore the full posterior support.

This discrepancy is rooted in the very structure of ABC-Gibbs. The algorithm proceeds by sequentially updating each parameter conditionally on the others using ABC rejection steps. In the presence of strong dependencies between parameters, as is the case here, the algorithm exhibits poor mixing: each marginal update is constrained by the current (possibly incorrect) values of the other parameters. As a result, the sampler explores a narrow region of the parameter space, failing to recover the full shape of the posterior.

These results underscore the limitations of coordinate-wise ABC approaches in models with strong local-global coupling. In contrast, permABC-based strategies, which jointly consider the full dataset and parameter configuration at each step, are more robust to such structural challenges.

\begin{figure}[ht]
    \centering
    \includegraphics[scale=.6]{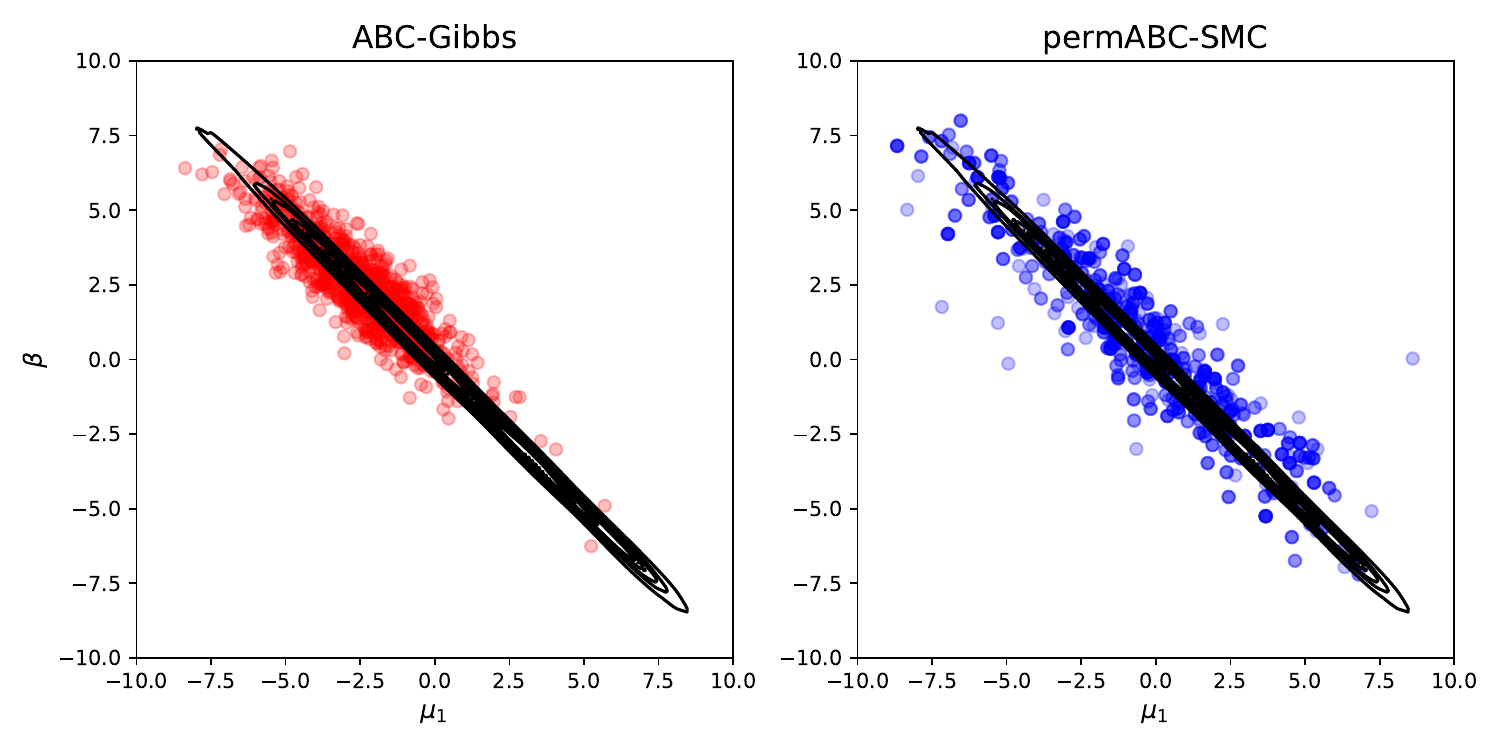}
    \caption{
        Two-dimensional marginal posterior over \( (\mu_1, \beta) \) for the hierarchical model defined in Equation~\eqref{eqgauss_robin}, under a fixed simulation budget. The black ellipses represent level sets of the exact posterior distribution computed via MCMC (PyMC). Blue points correspond to samples produced by permABC-SMC, and red points to those produced by ABC-Gibbs. While permABC-SMC captures the full posterior geometry and respects the strong negative correlation between \( \mu_1 \) and \( \beta \), ABC-Gibbs fails to explore the posterior support. It becomes trapped along a narrow subset of the ridge, leading to a truncated and biased approximation. This illustrates the poor mixing behavior of coordinate-wise ABC samplers in the presence of strong global-local dependencies.
    }
    \label{fig:perm_vs_gibbs}
\end{figure}

\subsection{Discussion of the Over Sampling and Under Matching Approaches}
\label{sec:osum}

The Over Sampling and Under Matching strategies provide two complementary extensions of the permABC framework, each improving robustness and efficiency in different regimes of hierarchical inference under exchangeability.

We evaluate their performance on a Gaussian toy model with \( K = 20 \) compartments, among which 4 have local means \( \mu_k \) lying in the tails of the prior. This setup simulates a realistic scenario involving mild model misspecification or contamination by outliers.
Figure~\ref{fig:osum} reports the number of simulations required to reach a given ABC threshold \( \varepsilon \), normalized by 1000 unique particles.

The over sampling method (permABC-OS) achieves lower values of \( \varepsilon \) with fewer iterations than standard permABC-SMC. While each iteration involves more expensive computations—due to larger assignment problems and duplicated particles (see Appendix~\ref{appendix_os})—the faster convergence compensates for this cost, making the method highly effective even in the low-tolerance regime.

Under matching (permABC-UM) performs well in the intermediate-\( \varepsilon \) range. By relaxing the matching constraint to only \( L < K \) compartments, it allows rapid convergence while retaining robustness against model misspecification. However, as the algorithm transitions to full matching (\( L_t \to K \)), the acceptance condition tightens, often resulting in degeneracy of the particle population. Unlike in the OS strategy, this issue cannot be alleviated by duplication, and the method typically fails to reach very small values of \( \varepsilon \).

These contrasting behaviors suggest that combining the two strategies could yield additional benefits. One possibility is to begin inference with permABC-UM to quickly eliminate poorly fitting regions, then switch to permABC-SMC to refine toward smaller tolerances. More generally, hybrid schemes where \( L_t \) and \( M_t \) are jointly adapted, potentially alternating with changes in \( \varepsilon \), offer a promising direction for future exploration.

\begin{figure}[ht]
    \centering
    \includegraphics[scale=.5]{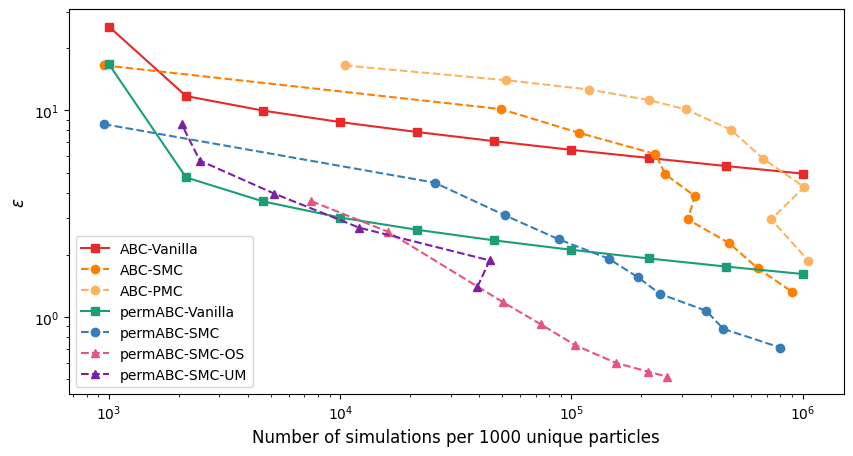}
    \caption{
        Comparison of standard ABC methods and permutation-based variants in terms of achieved tolerance \( \varepsilon \) versus computational budget. The \( x \)-axis shows the number of simulations per 1,000 unique accepted particles (log-scale), and the \( y \)-axis represents the effective tolerance \( \varepsilon \) (log-scale). Lower values of \( \varepsilon \) correspond to more accurate approximations of the pseudo-posterior. Each curve represents a different algorithm. Permutation-based methods achieve substantially lower tolerance levels for the same simulation budget.}
        
    \label{fig:osum}
\end{figure}

\subsection{Application to Real Data: The SIR Model}
\label{sec:sir}

In this final section, we demonstrate the applicability of our methodology on real-world data by analysing the early spread of COVID-19 in France, using publicly available hospitalization records from Santé publique France \citep{DataCovid}. The dataset provides daily counts of hospital admissions at various geographic levels. We focus on mainland France (excluding overseas territories and Corsica) resulting in 94 departmental trajectories of infected individuals.

To model the epidemic dynamics, we use the classical Susceptible–Infectious–Recovered (SIR) model, governed by the following system of ordinary differential equations:
\begin{equation}
    \label{eqSIR}
    \begin{split}
        \frac{dS}{dt} &= -\delta \frac{SI}{N},\\
        \frac{dI}{dt} &= \delta \frac{SI}{N} - \nu I,\\
        \frac{dR}{dt} &= \nu I,
    \end{split}
\end{equation}
where \( S \), \( I \), and \( R \) represent the numbers of susceptible, infectious, and recovered individuals, respectively. The parameters \( \delta \) and \( \nu \) denote the transmission and recovery rates, and the basic reproduction number is defined as \( R_0 = \delta / \nu \). 
To mitigate confounding factors such as viral mutations or heterogeneous regional interventions, we restrict our analysis to the first wave of the epidemic, from March to July 2020. During this early stage, the virus strain and public health policies were relatively homogeneous across the country, justifying a simplified hierarchical modeling approach.

Initial explorations of the data (see Appendix~\ref{appendix_sir_validation}) suggest that the reproduction number \( R_0 \) is shared across all departments: we treat it as a global parameter. Each department \( k =  1, \dots, 94 \) is associated with local parameters \( \locparam_k = (I_k(0), R_k(0), \delta_k) \), describing the initial conditions and the local transmission rate. This formulation naturally fits within the permABC framework, with exchangeability and conditional independence across geographic compartments. When defining the distance between observed and simulated data (see \eqref{eqdistance}), the weights \( w_k \) may optionally be chosen proportional to the population of each department, to account for the relative importance of each trajectory.

This epidemiological context was our original motivation for developing the permABC methodology. It exemplifies a setting where classical ABC approaches often fail due to identifiability issues and the lack of permutation invariance. Our framework was specifically designed to address these challenges in a scalable and robust manner.

Full Bayesian inference at the departmental level leads to a high-dimensional parameter space: with \( K = 94 \) compartments, three local parameters per compartment, and a single global parameter \( R_0 \), the total dimension reaches \( d = 3 \times 94 + 1 = 283 \). Such settings are intractable for standard ABC methods, even in their sequential variants, due to the curse of dimensionality and inefficient exploration. 
In contrast, permABC-SMC enables efficient inference in this setting, especially when combined with the independent-proposal Metropolis-within-Gibbs kernels introduced in Section~\ref{sec:kernels}. It captures both global trends and local heterogeneity, allowing joint inference across all departments while preserving precision on the global reproduction number.

Figure~\ref{fig_SIR_posterior} illustrates this advantage by comparing the posterior distributions of \( R_0 \) obtained using either aggregated national data or the full set of departmental trajectories. When using national data, the inference yields flatter posteriors due to the loss of spatial information. In contrast, the posterior obtained via permABC-SMC using all departmental data is significantly more concentrated, highlighting the benefits of hierarchical and permutation-aware inference. This improvement stems from the model's ability to match each trajectory with different local parameters \( \mu_k \). These local heterogeneities are visualized on the right panel of Figure~\ref{fig_SIR_posterior}, and also mapped geographically in Figure~\ref{fig_SIR_map}.

However, the SIR model remains a simplified representation of epidemic dynamics, and is naturally misspecified when applied to complex real-world trajectories. This results in posterior estimates of the recovery rate \( \nu_k \) that can reach values as high as 4, which are not epidemiologically plausible given that \( \nu \) is a rate that should not typically exceed 1. This limitation illustrates the trade-off between model simplicity and fidelity in real-data applications.

\begin{figure}[ht]
    \centering
    \includegraphics[scale=.6]{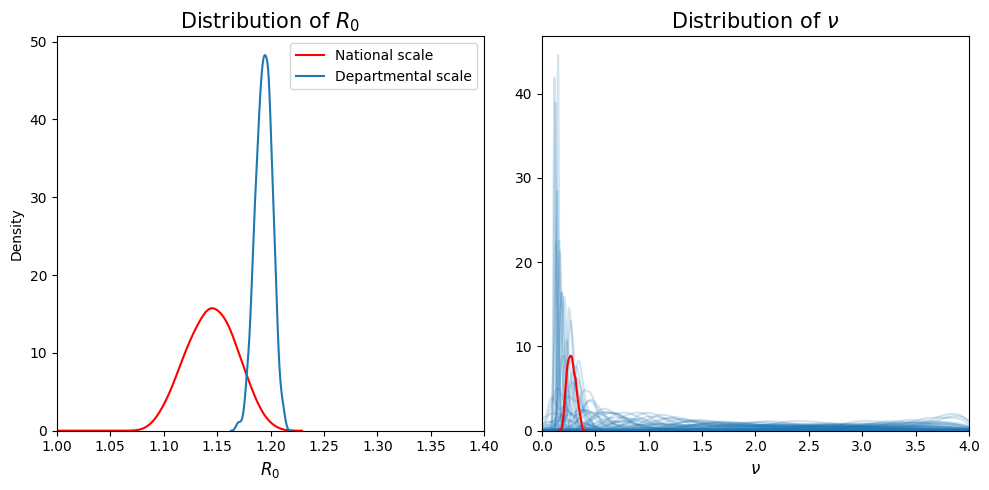}
    \caption{Posterior approximation for the global \(R_0\) parameter and the local recovery rates \(\nu_k\) in the hierarchical SIR model described in \(\eqref{eqSIR}\), where \(R_0 = \delta / \nu\). The inference is performed using departmental-level data (\( K = 94 \) in blue) or national-level aggregation (\(K = 1\) in red).}
    \label{fig_SIR_posterior}
\end{figure}

\begin{figure}[ht]
    \centering
    \includegraphics[scale=.6]{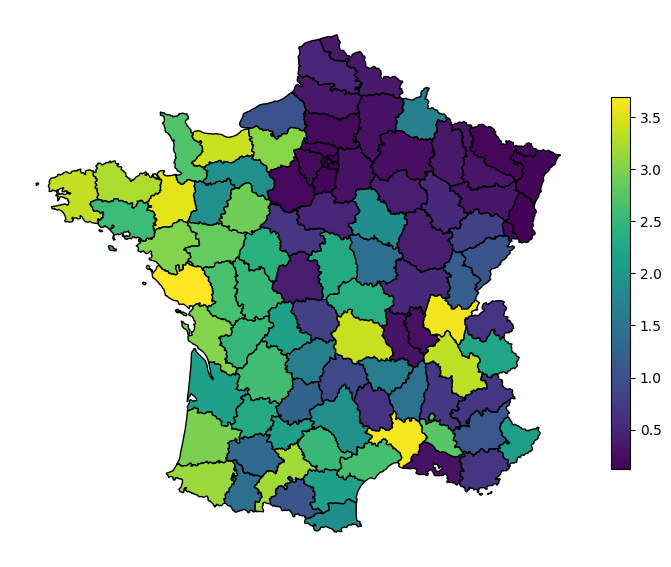}
    \caption{Map of the posterior mean estimates of the local recovery rate \(\nu_k\) for each department.}
    \label{fig_SIR_map}
\end{figure}

\section{Conclusion}

We have introduced \emph{permABC}, a novel framework for Approximate Bayesian Computation (ABC) tailored to hierarchical models with exchangeable structure. By exploiting the invariance of compartments under permutations, permABC restores identifiability among local parameters and significantly improves acceptance rates in high-dimensional settings.

Extending this framework, we developed a sequential version, \emph{permABC-SMC}, which adapts classical ABC-SMC algorithms to permutation-based acceptance rules. We further proposed two complementary strategies: \emph{Over Sampling}, which enriches the simulation space early on to facilitate acceptance, and \emph{Under Matching}, which relaxes the matching constraint to allow robust early-stage inference. These approaches provide new tools for navigating pseudo-posterior sequences, enabling both computational efficiency and resilience to model misspecification.

Our theoretical results show that permABC retains the desirable convergence properties of classical ABC algorithms in the small-tolerance regime. Furthermore, numerical experiments highlight the efficiency and stability of permABC in high-dimensional scenarios, as well as its robustness to model misspecification where coordinate-wise approaches like ABC-Gibbs may fail.

Finally, we demonstrated the practical relevance of our method through an application to COVID-19 epidemic data, using a hierarchical SIR model across 94 departments. This application highlights the practical impact of the permABC: leveraging local heterogeneity leads to sharper inference on global parameters, opening new directions for scalable and robust likelihood-free inference.

All proposed methods are implemented in a dedicated Python package called \texttt{permabc}, available at \url{https://github.com/AntoineLuciano/permABC}. The repository also includes code to reproduce the figures and experiments presented in this paper.

Future directions include a deeper theoretical and empirical investigation of the Over Sampling and Under Matching strategies, with particular emphasis on the design of more efficient proposal distributions. An additional promising avenue is to explore joint or adaptive use of both strategies within a unified sequential framework: their complementarity may further enhance inference performance.

\section*{Acknowledgements}

Antoine Luciano is supported by a PR[AI]RIE PhD grant. Christian P. Robert is also affiliated with the University of Warwick, UK, and ENSAE-CREST, Palaiseau, and is funded by the European Union under the GA 101071601, through the 2023-2029 ERC Synergy grant OCEAN and by a PR[AI]RIE chair from the Agence Nationale de la Recherche (ANR-19-P3IA-0001).

\bibliography{permABC}

\appendix

\section{Stratified version of permABC}
\label{appendix_secstrat}

Even though permABC is equivalent to classical ABC in the small $\varepsilon$ regime, it differs outside of this regime because it integrates $\simdata$ into the constrained space $\dataspace_{\obsdata}^*$. Here, we propose a method called stratified permABC, which maintains the same acceptance rate but is exactly equivalent to ABC regardless of the value of $\varepsilon$.

In the stratified permABC algorithm, similar to the standard approach, we sample from the prior $\parprop \sim \pi(\cdot)$ and then sample data from the likelihood: $\simdata \sim f(\cdot \mid \parprop)$. Here, we consider the set of all permutations in the symmetric group $\mathcal{S}_K$, and determine the subset of acceptable permutations $\mathcal S_K^\varepsilon = \{ \sigma \in \mathcal S_K : d(\obsdata,\simdata_{\sigma}) \leq \varepsilon \}$. The acceptance condition is then $\left | \mathcal S_K^\varepsilon \right | > 0$, which is equivalent to $\min_{\sigma \in \mathcal S_K} d(\obsdata,\simdata_{\sigma}) \leq \varepsilon$. We then accept the simulated $\parprop_{\sigma'}$, where $\sigma' \sim \mathcal U(\mathcal S_K^\varepsilon)$ with a weight proportional to $\lvert \mathcal S_K^\varepsilon \rvert$.

This methodology produces samples from the classical ABC pseudo-posterior distribution:
\begin{equation}
    \begin{split}
        \bar \pi_{\varepsilon} (\param \mid \obsdata) & \propto \pi(\param) \int_{A_{\obsdata,\varepsilon}} \frac{f(\simdata \mid \param)}{\lvert \mathcal S_K^\varepsilon \rvert} d\simdata \cdot \lvert \mathcal S_K^\varepsilon \rvert\\
        & \propto \pi_{\varepsilon} (\param \mid \obsdata)
    \end{split}
\end{equation}

However, this ABC scheme is impractical because it is too costly. Sampling $K!$ permutations is unfeasible. Thus, we can write this last quantity $\lvert \mathcal S_K^\varepsilon \rvert$ as $\mathbb{E}_{\sigma \sim \mathcal{U}(\mathcal{S}_K)}\left[\mathbb{I}_{A_{\obsdata,\varepsilon}}(\simdata_{\sigma})\right]$, which we will estimate by Monte Carlo. For small $\varepsilon$, a standard Monte Carlo simulation will have large variance: indeed, for most permutations $\sigma$, $d(\eta(\obsdata), \eta(\simdata_{\sigma})) > \varepsilon$, i.e., $\mathbb{I}_{A_{\obsdata,\varepsilon}}(\simdata_{\sigma})=0$. 

Instead, we resort to importance sampling. We write

\[\mathbb{E}_{\sigma \sim \mathcal{U}(\mathcal{S}_K)}\left[\mathbb{I}_{A_{\obsdata,\varepsilon}}(\simdata_{\sigma})\right] = 
\mathbb{E}_{\sigma \sim \rho}\left[\frac{\mathbb{I}_{A_{\obsdata,\varepsilon}}(\simdata_{\sigma})}{K!\rho(\sigma)}\right]\]

To do this, we sample $\sigma_{\iperm} \sim \rho(\sigma)$ for $\iperm=1,\dots, L$ and replace the weights $\lvert \mathcal S_K^\varepsilon \rvert$ by the following importance sampling estimator: $\frac{1}{L}\sum_{l=1}^L \frac{\mathbb{I}_{A_{\obsdata,\varepsilon}}(\simdata_{\sigma_{\iperm}})}{\rho(\sigma_{\iperm})}$. 

We now look for a good proposal distribution $\rho$ over $\mathcal{S}_K$.

\subsection{Simulating Non-uniform Permutations}

A good proposal $\rho$ will have support all of the symmetric group $\mathcal{S}_K$, but will be concentrated on values which are more likely to verify $\sigma$, $d(\eta(\obsdata), \eta(\simdata_{\sigma})) < \varepsilon$. We proceed in two steps:

\begin{itemize}
    \item We compute $\sigma^* = \arg\min_{\sigma} d(\eta(\obsdata), \eta(\simdata_{\sigma}))$, which can be done efficiently using a Jonker-Volgenant algorithm. (Complexity $\mathcal{O}(K^3)$)
    \item If $d(\eta(\obsdata), \eta(\simdata_{\sigma^*})) > \varepsilon$, we must reject $\param$. On the other hand, if $d(\eta(\obsdata), \eta(\simdata_{\sigma^*})) < \varepsilon$, we use a distribution $\rho(\cdot\mid\sigma^*)$ concentrated near $\sigma^*$.
\end{itemize}

We now propose a strategy to devise such a distribution $\rho(\cdot\mid\sigma^*)$. An additional constraint is that since this distribution will be used in importance sampling, its probability mass function must be computable pointwise. Our distribution is best understood through its generative process:

\begin{enumerate}
    \item Generate an auxiliary variable $N$ from a distribution $p_N$ with support $\{1, \ldots, K\}$; we choose the arbitrary distribution $N \sim 1 + \text{Poisson}(\lambda)$ truncated at $K$.
    \item If $N = 1$ then return $\sigma = \sigma^*$.
    \item If $N > 1$, generate $\sigma$ uniformly from the set $\mathfrak{S}_N = \left\{\sigma \in \mathcal{S}_K: \lVert \sigma - \sigma^*\rVert_0 = N\right\}$ where $\lVert \cdot \rVert_0$ is the Hamming or edit distance: $\lVert\sigma - \sigma^*\rVert_0 = \sum_{j=1}^K \mathbb{I}_{\{\sigma(i) \neq \sigma^*(i)\}}$.
\end{enumerate}

Note that this choice is valid because $\cup_n \mathfrak{S}_n = \mathcal{S}_K$ and because $\mathfrak{S}_1 = \emptyset$ and $\mathfrak{S}_0 = \{\sigma^*\}$. 

To generate uniformly from $\mathfrak S_n$, we select uniformly $n$ elements out of $K$, generate a random derangement of length $n$, and apply it to the selected elements. Recall that a derangement is a permutation that leaves no point invariant; the number of derangements of length $n$ is called the subfactorial of $n$, denoted by $!n$ and is easily computed: $!n = n! \sum_{k=0}^n \frac{(-1)^k}{k!} = \lceil \frac{n!}{e} \rfloor$ where $\lceil \cdot \rfloor$ denotes the closest integer.
The set $\mathfrak{S}_n$ is known as the set of the partial derangements and is then such that $\lvert \mathfrak S_n \rvert = \binom{K}{n} !n$. 
Thus the probability of sample $\sigma^*$ is: $\rho(\sigma^*\mid \sigma^*) = p_N(1)$ and for  permutation $\sigma \neq \sigma^*$, let $n = \lVert\sigma - \sigma^*\rVert_0$,  the mass function at $\sigma$ is:

$\rho(\sigma\mid\sigma^*) = p_N(n)\frac{1}{\lvert \mathfrak{S}_n \rvert} = p_N(n)\frac{1}{\binom{K}{n}!n}$

\subsection{Stratified estimator}

With this instrumental distribution centered at $\sigma^*$, we can enhance the effectiveness of our method through the implementation of a stratified estimator. In doing so, we can manage the permutations we randomly sample, ensuring acceptance of parameter $\parprop$ if at least one permutation (specifically $\sigma^*$) satisfies the ABC conditions.

We establish a partition of the set $\mathcal S_K$ into $H$ strata denoted as $(D_h)$ with corresponding weights $(w_h)$ for $h=1,\dots,H$. Here, $D_1 = \mathfrak S_0 = \{ \sigma^* \}$, $D_h = \mathfrak S_h$ for $h = 2,\dots,H-1$, and finally $D_H = \cup_{h=H}^K \mathfrak S_h$, with $w_h = p_N(h)$ for $h=1,\dots,H-1$ and $w_H = \sum_{h'=H}^K p_N(h')$. Subsequently, we define a new sampling distribution as a mixture across our strata: $\rho(\sigma)= \sum_{h=1}^H w_h \cdot \rho_h(\sigma)$, where $\rho_h$ represents the distribution within each stratum $D_h$: $\rho_h = \mathcal U(D_h)$ for $h=1,\dots,H-1$, and $\rho_H = \frac{1}{w_H}\sum_{h=H}^K p_N(D_h)\cdot \mathcal U(\mathfrak S_h)$. Additionally, we select the positive number $\Npermstrat{h}>0$ of permutations within each stratum. Thus, for the initial stratum, we must have $\Npermstrat{1} = 1$, ensuring consideration of $\sigma^*$, and $\Npermstrat{h}>0$ for $h=2,\dots,H$. Consequently, we derive the following stratified estimator with $\sigma_{\iperm}^{(h)} \sim \rho_h$ for $h=1,\dots,H$ and $\iperm=1,\dots,\Npermstrat{h}$:

$\mathbb E_{\sigma \sim \mathcal U(\mathcal S_K)} \left [ \mathbb I_{A_{\obsdata,\varepsilon}}(\simdata_{\sigma}) \right] \approx \sum_{h=1}^H \frac{1}{\Npermstrat{h}} \sum_{l=1}^{\Npermstrat{h}}\frac { \mathbb I_{A_{\obsdata,\varepsilon}}(\simdata_{\sigma_l^{(h)}})}{\rho_h(\sigma_l^{(h)})} $

In this section, we have presented an effective ABC simulation method that seamlessly integrates with the traditional accept-reject ABC Algorithm, often referred to as ABC Vanilla. This integration leads to the formulation of Algorithm \ref{alg:permABC_star}, which we specifically label as permABC Vanilla.

\begin{algorithm}[ht]
\caption{Stratified permABC Vanilla}
\textbf{Input: } observed dataset $\obsdata$, number of iterations $N$, threshold $\varepsilon > 0$, number of strata $H$, proposal distributions $\rho_1, \dots, \rho_H$. \\
\textbf{Output: } a sample $\{\param^{(1)}, \dots, \param^{(N)}\}$ from the permuted ABC approximation $\pi_\varepsilon(\cdot \mid \obsdata)$.\\

\BlankLine

\For{$i = 1, \dots, N$}{
    \Repeat{$d(\obsdata, \simdata_{\sigma^*}) \leq \varepsilon$}{
        $\parprop \sim \pi(\cdot)$\;
        $\simdata \sim f(\cdot \mid \parprop)$\;
        $\sigma^* = \argmin_{\sigma} d(\obsdata, \simdata_\sigma)$\;
        $d_i^* = d(\obsdata, \simdata_{\sigma^*})$\;
    }

    \For{$h = 1, \dots, H$}{
        Generate permutations $\sigma_1^{(h)}, \dots, \sigma_{L_h}^{(h)} \sim \rho_h(\cdot)$\;
        \For{$l = 1, \dots, L_h$}{
            Compute $d_l^{(h)} = d(\obsdata, \simdata_{\sigma_l^{(h)}})$\;
            Compute weight $\omega_l^{(h)} = \frac{\mathbb{I}\{d_l^{(h)} < \varepsilon\}}{\rho_h(\sigma_l^{(h)})}$\;
        }
    }

    Normalize all weights: $\tilde{\omega}_l^{(h)} = \omega_l^{(h)} / \sum_{h=1}^H \sum_{l=1}^{L_h} \omega_l^{(h)}$\;

    Sample $\sigma^{\text{prop}}$ among $\{\sigma_l^{(h)}\}_{l,h}$ with probability $\tilde{\omega}_l^{(h)}$\;

    Set $\param^{(i)} = \parprop_{\sigma^{\text{prop}}}$\;

    Assign weight $W_i \propto \sum_{h=1}^H \frac{1}{L_h} \sum_{l=1}^{L_h} \frac{\mathbb{I}\{d(\obsdata, \simdata_{\sigma_l^{(h)}}) < \varepsilon\}}{\rho_h(\sigma_l^{(h)})}$\;
}
\end{algorithm}

\section{Theoretical Guarantees for permABC}
\label{appendix_proofs}

We now provide the proofs of the theoretical results stated in Section~\ref{sec:comparison}. The key insight is that for small enough $\varepsilon$, the optimal permutation is unique, which implies that permABC and standard ABC accept the same samples.

\begin{proof}[Proof of Lemma~\ref{lempermABC_star}]
Let $\sigma^* := \arg\min_{\sigma \in \mathcal{S}_K} d(\obsdata, \simdata_{\sigma})$ and assume $\varepsilon < \varepsilon^*$. We consider two cases:

\textbf{Case 1:} If $d(\obsdata, \simdata_{\sigma^*}) > \varepsilon$, then no permutation satisfies the ABC condition:
\[
\mathrm{Card} \left( \left\{ \sigma \in \mathcal{S}_K : d(\obsdata, \simdata_{\sigma}) \leq \varepsilon \right\} \right) = 0.
\]

\textbf{Case 2:} Suppose without loss of generality that $\sigma^* = \mathrm{Id}$ and $d(\obsdata, \simdata) \leq \varepsilon$. For any $\sigma \neq \mathrm{Id}$, we have:
\[
\begin{aligned}
d(\obsdata, \simdata_{\sigma}) 
&\geq \left| d(\obsdata, \obsdata_{\sigma}) - d(\obsdata_{\sigma}, \simdata_{\sigma}) \right| \\
&= \left| d(\obsdata, \obsdata_{\sigma}) - d(\obsdata, \simdata) \right| \\
&> \varepsilon,
\end{aligned}
\]
since $d(\obsdata, \obsdata_{\sigma}) > 2\varepsilon$ and $d(\obsdata, \simdata) \leq \varepsilon$. Therefore, only the identity permutation satisfies the ABC criterion.
\end{proof}

\begin{proof}[Proof of Theorem~\ref{th:permABC_star}]
Let $\varepsilon < \varepsilon^*$. For every simulated dataset $\simdata \in B_\varepsilon(\obsdata)$ such that $d(\obsdata, \simdata) \leq \varepsilon$, Lemma~\ref{lempermABC_star} implies that at most one permutation satisfies $d(\obsdata, \simdata_{\sigma}) \leq \varepsilon$. In that case, the optimal permutation is necessarily $\sigma = \mathrm{Id}$.

Consequently, permABC and standard ABC both accept the same set of simulations, and:
\[
A^*_{\obsdata,\varepsilon} = B_{\varepsilon}(\obsdata) \cap \dataspace_{\obsdata}^{*K} = B_{\varepsilon}(\obsdata) = A_{\obsdata,\varepsilon},
\]
which implies:
\[
\pi^*_{\varepsilon}(\cdot \mid \obsdata) = \pi_{\varepsilon}(\cdot \mid \obsdata).
\]
\end{proof}

\begin{proof}[Proof of Corollary~\ref{col:permabc}]
By Theorem~\ref{th:permABC_star}, we have $\pi^*_{\varepsilon}(\cdot \mid \obsdata) = \pi_{\varepsilon}(\cdot \mid \obsdata)$ for all $\varepsilon < \varepsilon^*$. If $\pi_{\varepsilon}(\cdot \mid \obsdata) \to \pi(\cdot \mid \obsdata)$ as $\varepsilon \to 0$, then the same convergence holds for $\pi^*_{\varepsilon}(\cdot \mid \obsdata)$.
\end{proof}

\section{Technical Details for Over Sampling}
\label{appendix_os}

\subsection{Formal Definition of \( \pi^{*M}_\varepsilon \)}

Given a dataset \( \obsdata \in \dataspace^K \), the over sampling acceptance condition selects particles whose associated simulated data \( \simdata \in \dataspace^M \) satisfy the constraint:
\[
\min_{\sigma \in \mathcal A_K^M} d(\obsdata, \simdata_\sigma) \leq \varepsilon.
\]
We denote the acceptance region by
\[
\tilde A_{\varepsilon, \obsdata}^{M} = \left\{ \simdata \in \dataspace^M \mid \min_{\sigma \in \mathcal A_K^M} d(\obsdata, \simdata_\sigma) \leq \varepsilon \right\}.
\]

The associated joint pseudo-posterior is then defined as:
\[
\tilde{\pi}_\varepsilon^M(\param, \simdata \mid \obsdata) \propto \pi(\param) f(\simdata \mid \param) \, \mathbb{I}_{\tilde A_{\varepsilon, \obsdata}^{M}}(\simdata).
\]

To restore identifiability, we apply a projection \( \proj \) that reorders the parameters and data. Specifically, the optimal partial permutation \( \sigma^* \in \mathcal A_K^M \) is used to assign the best-matching \( K \) simulated compartments to the observed data. The remaining \( M - K \) components are sorted by index. We define:
\[
\proj(\param, \simdata) := (\param_{\sigma^*}, \simdata_{\sigma^*}),
\]
and the projected joint pseudo-posterior becomes:
\[
\pi^{*M}_{\varepsilon}(\param, \simdata \mid \obsdata)
:= \projsharp \tilde{\pi}_{\varepsilon}^{M}(\param, \simdata \mid \obsdata)
\]

This distribution lives on \( \Theta^M \times \dataspace^M \), but we are primarily interested in the marginals over the global parameter \( \globparam \) and the local parameters \( \locparam_{1:K} \) corresponding to the optimally matched compartments.

\subsection{Choosing \( \varepsilon \) Given \( M_0 \)}

As discussed in Section~\ref{sec:permABC_OS}, the acceptance condition \eqref{eq:os} becomes easier to satisfy as \( M \) increases. Consequently, for a given tolerance \( \varepsilon \), a larger value of \( M \) results in a higher acceptance probability. To ensure that a sufficient number of particles are alive in the initial iteration, one must jointly select \( M_0 \) and \( \varepsilon \) so as to reach a desired acceptance rate.

Empirically, the optimal assignment distance \( \min_{\sigma} d(\obsdata, \simdata_\sigma) \) tends to decrease monotonically with \( M \). Given a simulation budget of \( N \) particles, we define the calibration of \( \varepsilon \) as the smallest value such that at least a proportion \( p \) of the simulated datasets satisfy the condition:
\[
\varepsilon := \inf \left\{ \varepsilon > 0 \mid \frac{1}{N} \sum_{i=1}^N \mathbb{I} \left( \min_\sigma d(\obsdata, \simdata^{(i)}_\sigma) \leq \varepsilon \right) \geq p \right\}.
\]

In other words, \( \varepsilon \) is chosen as the empirical quantile of order \( p \) of the distances \( \min_\sigma d(\obsdata, \simdata^{(i)}_\sigma) \), computed on the initial sample. This guarantees that a proportion approximately $p$ of the particles satisfy the acceptance criterion and can thus be propagated into the first population. In practice, we recommend using \( p = 0.95 \).

\subsection{Designing the Sequence \( M_t \)}

To avoid sudden drops in the acceptance rate between iterations, the sequence \( M_t \) must decay smoothly. A practical heuristic is to choose:
\[
M_{t+1} = \lfloor K + (M_t - K) \gamma \rfloor \wedge M_t -1
\]
for some decay exponent \( \gamma > 0 \). This schedule allows for a slow reduction at the beginning (when population diversity is critical), followed by a sharper decay toward the final iteration where convergence to \( M_T = K \) is enforced. In practice, we recommend using a conservative value such as \( \gamma = 0.9 \).

\subsection{Particle Duplication Strategy}

Even under a conservative schedule such as \( M_{t+1} = M_t - 1 \), the probability of discarding particles during the transition from iteration \( t \) to \( t+1 \) can remain high. This is due to the potential exclusion of compartments that previously matched the observed data under the acceptance criterion \eqref{eq:os}.

More precisely, the probability that a matched compartment from iteration \( t \) is excluded from the optimization at iteration \( t+1 \) can be upper bounded by:
\[
1 - \frac{(M_t - K)!}{(M_{t+1} - K)!} \cdot \frac{M_{t+1}!}{M_t!},
\]
which equals \( K/(K+1) \) in the case \( M_t = K + 1 \) and \( M_{t+1} = K \). This bound highlights the non-negligible risk of mass extinction, especially in the final iterations.

To mitigate this, we introduce a duplication mechanism. Before evaluating the acceptance condition at iteration \( t+1 \), each accepted particle from iteration \( t \) is replicated \( R \) times using independent random partial permutations \( \sigma_1, \dots, \sigma_R \in \mathcal{A}_K^{M_{t+1}} \), where we may assume \( \sigma_1 = \mathrm{Id} \) for simplicity. For each duplicated version, we test the acceptance criterion on \( \simdata_{\sigma_r} \) for \( r = 1, \dots, R \).

This procedure increases the likelihood that at least one configuration of the particle will pass the acceptance test, without requiring additional simulations of the likelihood. However, it does incur additional computational cost due to the \( R \) assignment problems per particle. In practice, we find that values such as \( R = 5 \) or \( R = 10 \) are sufficient to ensure stability while keeping the cost reasonable. The value of \( R \) can be chosen adaptively based on the upper bound described above.

\section{Technical Details for Under Matching}
\label{appendix_um}

\subsection{Distance, Acceptance Region, and Projection}

As introduced in Section~\ref{sec:undermatching}, the under matching strategy accepts simulated datasets \( \simdata \in \dataspace^K \) when a subset of \( L < K \) compartments can be matched to observed data. The acceptance region is defined as:
\[
\tilde A_{\varepsilon, \obsdata}^{L} = \left\{ \simdata \in \dataspace^K \mid \min_{\sigma, \tilde{\sigma} \in \mathcal A_L^K} d(\obsdata_{\tilde \sigma}, \simdata_\sigma) \leq \varepsilon \right\}.
\]

The pseudo-posterior then takes the form:
\[
\tilde{\pi}_\varepsilon^L(\param, \simdata \mid \obsdata) \propto \pi(\param) f(\simdata \mid \param) \, \mathbb{I}_{\tilde A_{\varepsilon, \obsdata}^{L}}(\simdata).
\]

As in the over sampling case, we define a projection operator \( \proj \) that maps each accepted pair \( (\param, \simdata) \) to a reordered version where the \( L \) optimally matched components are moved to the top \( L \) positions, and the remaining \( K - L \) components follow in order. This projected distribution, denoted \( \pi^{*L}_\varepsilon \), focuses on inference for the global parameter and the best-matching local components.

\subsection{Designing the Sequence \( L_t \)}

To guide the algorithm toward the full matching regime, we define a growing schedule:
\[
L_{t+1} = \lfloor L_t + (K - L_t) \cdot (1-\gamma) \rfloor \vee L_t + 1
\]
for some smoothing exponent \( \gamma \in (0, 1] \). This ensures that the number of matched compartments increases gradually toward \( K \), with smaller changes in the early iterations and a final push to full matching.

In practice, values such as \( \gamma = 0.8 \) or \( 0.9 \) offer a good trade-off between gradual adaptation and final convergence. The stopping point of the algorithm is \( L_T = K \), at which point the standard permABC acceptance condition is recovered.

\section{Blockwise Metropolis-within-Gibbs Kernel}
\label{appendix_kernel}

This section details the MCMC transition kernel used in the permABC-SMC algorithm. At each iteration, the algorithm performs both a global move and a local move, ensuring comprehensive exploration of the posterior while maintaining computational efficiency through partial updates.

The global move proposes a new value for the shared parameter \( \globparam \) using a symmetric random walk proposal kernel \( q_t \), as defined in Section~\ref{sec:kernels}. Synthetic data are then generated for all compartments using the proposed value \( \globparam^* \), and the optimal permutation minimizing the distance to the observed data is computed. The proposal is accepted or rejected based on the ABC Metropolis–Hastings criterion, using the full data discrepancy.

The local move updates the collection of local parameters \( \locparam_1, \dots, \locparam_K \) blockwise. The set of compartments is randomly partitioned into \( H \) equally sized blocks. Within each block, new parameters are proposed using the same Gaussian random walk kernel \( q_t \), and new synthetic data are generated only for the updated compartments. The new configuration is accepted or rejected based on the global ABC condition, ensuring coherence of the posterior sample across all compartments.

Performing both global and local moves at every iteration leads to a total of \( 2NK \) data simulations per iteration for a particle population of size \( N \), and requires \( N(1 + H) \) linear assignment resolutions: one for the global move, and \( H \) for the local updates.

Although this construction is reminiscent of ABC-Gibbs methods, such as \citet{clarte2021componentwise}, the key distinction lies in the evaluation of the ABC discrepancy. In our case, the acceptance criterion remains global—defined over the full dataset—regardless of whether global or local parameters are being updated. This ensures that accepted configurations correspond to coherent fits to the observed data across all compartments.

In the Over Sampling setting, we consider a total of \( M > K \) compartments. At iteration \( t \), the kernel operates on the current set of \( M_t \) simulated compartments, among which only \( K \) are matched to the observed data under the optimal injection. These \( K \) matched compartments are partitioned into \( H \) blocks for Gibbs updates using the kernel \( q_t \), while the remaining \( M_t - K \) unmatched compartments form a separate block, updated independently by redrawing from the prior. Components beyond \( M_t \) are frozen.

In the Under Matching setting, the kernel operates on the current subset of \( L_t < K \) matched compartments. These \( L_t \) compartments are partitioned into \( H \) blocks, each updated via random walk proposals \( q_t \), while the \( K - L_t \) unmatched compartments are handled separately and sampled directly from the prior. In both cases, only the matched components contribute to the ABC discrepancy and drive acceptance.

\begin{algorithm}[ht]
\caption{MCMC kernel with global and local updates for permABC-SMC}
\label{alg:kernel}

\textbf{Input:} Current state $(\param, \simdata, \sigma)$, tolerance $\varepsilon$, number of blocks $H$ \\
\textbf{Output:} New state $(\tilde \param, \tilde \simdata, \tilde \sigma)$ targeting $\pi_\varepsilon(\cdot, \cdot \mid \obsdata)$

\BlankLine

$\tilde \param \leftarrow \param$, $\tilde \simdata \leftarrow \simdata$, $\tilde \sigma \leftarrow \sigma$ \\

\vspace{0.5em}
\textit{Global update:} \\
$\globparam^* \sim q_t(\cdot \mid \globparam)$ \\
Generate: $\simdata_k^* \sim g(\cdot \mid \globparam^*, \locparam_k)$ for $k = 1, \dots, K$ \\
Compute: $\sigma^* \leftarrow \argmin_{\sigma \in \mathcal{S}_K} d(\simdata^*_\sigma, \obsdata)$ \\
Draw $U_g \sim \mathcal{U}(0,1)$ \\
\If{
$U_g \leq \dfrac{\pi(\globparam^*) q_t(\globparam \mid \globparam^*, \sigma^*)}{\pi(\globparam) q_t(\globparam^* \mid \globparam, \sigma)} \cdot \mathbb{I}_{\tilde A_{\obsdata, \varepsilon}}(\simdata^*)$
}{
    $\tilde \globparam \leftarrow \globparam^*$ \\
    $\tilde \simdata \leftarrow \simdata^*$\\
    $\tilde \sigma \leftarrow \sigma^*$ \\
}

\vspace{0.5em}
\textit{Local update:} \\
Randomly partition $\{1, \dots, K\}$ into $H$ blocks $(B_1, \dots, B_H)$ \\

\For{$h = 1, \dots, H$}{
    $\simdata^* \leftarrow \tilde \simdata$ \\
    \For{$k \in B_h$}{
        Propose: $\locparam_k^* \sim q_t(\cdot \mid \tilde \locparam_k, \tilde \sigma)$ \\
        Simulate: $\simdata_k^* \sim g(\cdot \mid \tilde \globparam, \locparam_k^*)$
    
        Compute: $\sigma^* \leftarrow \argmin_{\sigma \in \mathcal{S}_K} d(\simdata^*_\sigma, \obsdata)$ \\
        Draw $U_h \sim \mathcal{U}(0,1)$ \\
        \If{
        $U_h \leq \dfrac{\pi(\locparam_{B_h}^*) q_t(\tilde \locparam_{B_h} \mid \locparam_{B_h}^*, \sigma^*)}{\pi(\tilde \locparam_{B_h}) q_t(\locparam_{B_h}^* \mid \tilde \locparam_{B_h}, \tilde \sigma)} \cdot \mathbb{I}_{\tilde A_{\obsdata, \varepsilon}}(\simdata^*)$
        }{
            $\tilde \locparam_{B_h} \leftarrow \locparam_{B_h}^*$\\
            $\tilde \simdata_{B_h} \leftarrow \simdata^*_{B_h}$\\
            $\tilde \sigma \leftarrow \sigma^*$ \\
        }
    }
}

\textbf{Return:} $(\tilde \param, \tilde \simdata, \tilde \sigma)$ with $\tilde \param = (\tilde \globparam, \tilde \locparam)$

\end{algorithm}

\section{Validation and Results of the SIR Model Application}
\label{appendix_sir}

\subsection{Preliminary Justification for the Hierarchical Structure}
\label{appendix_sir_validation}

Before applying permABC to the full departmental dataset, we performed a preliminary analysis to justify the modeling assumptions. In particular, we fitted independent SIR models to each department using standard ABC techniques. These runs revealed that although local dynamics exhibit variability in terms of initial conditions and intensity, the inferred values of the reproduction number \( R_0 \) were broadly consistent across departments during the first wave of the epidemic (March–July 2020).

This empirical observation supports the use of a global \( R_0 \) parameter shared across all compartments, while allowing for department-specific local parameters \( \locparam_k = (I_k(0), R_k(0), \delta_k) \).

The spatial distribution of \( \delta_k \) reflects variations in local factors such as population density, urbanization, mobility, or healthcare capacity. These differences highlight the necessity of treating local parameters separately in a hierarchical framework, rather than aggregating data at the national level.

\end{document}